\documentclass[twocolumn,longbibliography,nofootinbib,aps,pra,superscriptaddress]{revtex4-2}
\usepackage{CJK}
\pdfoutput=1 
\usepackage{soul}
\usepackage{mathtools,amssymb,amsfonts,amsthm,physics,subcaption,graphicx,multirow,float,bm,xcolor}
\usepackage{caption}
\usepackage{ragged2e}
\usepackage[colorlinks,citecolor=blue,linkcolor=blue,urlcolor=blue, breaklinks=true]{hyperref}
\usepackage[capitalize]{cleveref}
\usepackage[shortlabels]{enumitem}
\setlist{
  listparindent=\parindent,
  parsep=0pt,
}
\usepackage{accents}
\counterwithout{figure}{section}
\usepackage[none]{hyphenat}
\usepackage{csquotes}
\usepackage{tikz}
\usetikzlibrary{calc, arrows.meta}
\usepackage{xcolor}
\usepackage{tikz-3dplot}
\usetikzlibrary{math}
\usetikzlibrary{calc}
\usetikzlibrary{fit,positioning}%
\usepackage{subcaption}
\newtheorem{rmk}{Remark}

\newtheorem{example}{Example}
\newtheorem*{lemma*}{Lemma}

\usepackage{color}

\usepackage{bm}
\usepackage{dsfont}
\setlist{nosep}
\newcommand{\ubar}[1]{\underaccent{\bar}{#1}}

\usepackage{thmtools}
\usepackage{thm-restate}

\newcommand{\nocontentsline}[3]{}
\let\oldaddcontentsline\addcontentsline
\newcommand{\tocless}[2]{%
  \let\addcontentsline=\nocontentsline#1{#2}
  \let\addcontentsline\oldaddcontentsline}

\newtheorem{theorem}{Theorem}
\newtheorem{lemma}{Lemma}
\newtheorem{definition}{Definition}

\newcommand{\ifusp}{Department of Mathematical Physics, Institute of Physics, University of São Paulo,\\ Rua do Matão 1371, São Paulo 05508-090, São Paulo, Brazil}

\begin{document}

\title{Contextual advantage implies limited distinguishability in any physical theory}

\author{Roberto D. Baldijão}
\email{rbaldijao@perimeterinstitute.ca}
\affiliation{Perimeter Institute for Theoretical Physics, 31 Caroline Street North, Waterloo, Ontario N2L 2Y5, Canada}
\affiliation{International Centre for Theory of Quantum Technologies, University of Gda\'nsk, 80-308 Gda\'nsk, Poland}
\author{Felipe A. Barretto}
\affiliation{International Centre for Theory of Quantum Technologies, University of Gda\'nsk, 80-308 Gda\'nsk, Poland}
\affiliation{\ifusp}
\author{Jaros\l{}aw K. Korbicz}
\affiliation{Center for Theoretical Physics, Polish Academy of Sciences, Aleja Lotnik\'{o}w 32/46, 02-668 Warszawa, Poland}

\begin{abstract}
A central question in quantum information is whether a given set of states can provide an advantage in some information-processing task. We establish a universal connection between two such questions: how well a set of states can be discriminated, and whether it can power tasks whose advantage stems from generalized contextuality. Working in the framework of generalized probabilistic theories (GPTs), which includes quantum and classical systems as special cases, we show that any set of states able to provide a nonclassical advantage in a contextuality-powered task must obey nontrivial upper bounds on the success probability of every state discrimination task using the full set. Contextual advantage therefore implies limited distinguishability, exposing a trade-off between two basic operational resources. Importantly, this is more than a statement about the idealized limit: perfect distinguishability is not needed to preclude contextuality. Our thresholds lie strictly below unity, so any set whose discrimination performance exceeds them while still imperfectly distinguishable is already guaranteed to admit a noncontextual explanation. The bounds follow from a simple geometric property, linear dependence of the state set, take a closed analytical form, and depend only on the prior of the task and the convex geometry of the states. As an illustration, we apply them to generalized parity-oblivious multiplexing, where sufficiently high success in the task implies that at least one sub-ensemble of the codebook cannot power any contextual advantage.
\end{abstract}
\maketitle

\raggedbottom

\section{Introduction}

A central task in quantum information is minimum-error state discrimination (SD), in which the goal is to determine in which state a system was prepared when it is drawn from a known set $\{s_k\}$ with prior probabilities $\{p_k\}$. SD tasks are central both to the foundations of quantum theory and to its technological applications and, arguably, to any information theory as well (see Refs.~\cite{Bae_StructureSDQuantum,Bae_StructureSDGPT,Bergou_SDReview2007,barnett_SDReview} and references therein).

Our first main result shows that a simple geometric property of a set of states-- namely, whether it is linearly dependent-- imposes nontrivial bounds on the achievable success probability in any SD task involving the full set. This goes beyond the familiar limiting case that perfect discrimination requires linear independence: our bounds imply that whenever the full set of states is linearly dependent, the optimal success probability in any SD task using all of them is \emph{bounded away from \(1\)}. The resulting bounds are straightforward to compute, relying only on the convex geometry of the state set, and can be used to assess whether a target discrimination performance is feasible for a given set of states. {Useful as these bounds might be on their own, their most significant consequence emerges only later, once we connect them to generalized contextuality (Sec.~\ref{sec:Contextuality}): they turn into a universal limitation on which state sets can power contextual advantage.}

Furthermore, our results are broadly applicable: they hold not only for quantum and classical systems but also for any physical theory described within the framework of generalized probabilistic theories (GPTs)~\cite{Mueller_GPTLesHouche2021,barrett_GPTsInformationProcessing,Plavala_GPTIntro2023,Chiribella_QuantumFromPrinciples2016,hardy2011ReconstructQT,hardy2001quantum}. This framework encompasses both classical and quantum theory as special cases, while also allowing for a wider class of hypothetical models. Even if such models do not describe nature, they help clarify what distinguishes quantum theory~\cite{hardy2001quantum,hardy2011ReconstructQT,MuellerMasanes_2011ReconstructingQT} and illuminate how operational principles constrain physical theories~\cite{Chiribella_Purification2010,Mueller_BitsymmetryDual2012,barnum2019stronglysymmetricspectralconvex,BarnumMueller_HigherOrderInterference_2014,Pfister2013_InfoPrinciplePolytopicTheories,Baldijao_2022}. Besides applying to this broader class of theories, our results are more general in an additional sense: the bounds associated with a state set $\{s_k\}$ remain valid for any other set $\{t_k\}$, possibly belonging to a different GPT, provided that $\{t_k\}$ is related to $\{s_k\}$ by an injective linear map, $t_k = M s_k$ for all $k$. This implies that the bounds are invariant under linear maps that preserve the relevant geometric relations among the states and thus apply to equivalence classes of GPT state sets.

This generality also allows us to connect our SD bounds to a central notion of nonclassicality in any GPT system, namely generalized contextuality~\cite{Spekkens_Contextuality2005,Spekkens_NegativityAndC2008}. Contextuality captures when observed statistics fail to admit a classical explanation in which operationally indistinguishable procedures are represented identically, and has a precise formulation in prepare-and-measure scenarios within the GPT framework~\cite{Schmid_2021_NCinGPTsystems}. It subsumes several notions of nonclassicality, including Bell nonlocality~\cite{Bell1964} and Kochen--Specker contextuality~\cite{KochenSpecker1967}, and has been identified as a key resource underlying advantages in a wide range of information-processing tasks, from communication~\cite{Spekkens_POM,Chailloux_2016} and cryptography to computation\cite{Howard2014,Raussendorf2013} and thermodynamics~\cite{Lostaglio_CandTheromLinearResponse2020}.
Our second main result shows that any set of states that gives rise to contextuality in prepare-and-measure experiments must obey our bounds on state discrimination. Thus, contextual advantage implies a fundamental limitation on distinguishability. Conversely, if a set of states ${s_k}$ is sufficiently distinguishable to violate our bounds, then any prepare-and-measure experiment using this set admits a classical explanation. This shows that nonclassical advantage is fundamentally  and universally incompatible with sufficiently high discriminability, in any GPT.

Before presenting these results, we briefly review the GPT framework for prepare-and-measure scenarios and the basics of state discrimination. In Sec.~\ref{sec:Technical} we show our technical results leading to the SD bounds and their properties, and in sec.~\ref{sec:Contextuality} we state our main conceptual result.

\section{Preliminaries}
\label{sec:Preliminaries}
{The scenarios we consider throughout this work are \emph{prepare-and-measure} scenarios. In such scenarios, a physical system is prepared through some procedure and is subsequently measured, producing an outcome (see Fig.~\ref{fig:PrepareAndMeasure}).} {Denoting preparation procedures by $P$ and measurement procedures by $M$, with outcomes labelled by $k$, repeating the experiment many times yields the statistics of outcomes conditioned on the procedures, $p(k\,|\,M,P)$. Different procedures may be indistinguishable at the level of these statistics: two preparations are \emph{operationally equivalent}, denoted $P\simeq P'$, if $p(k|M,P)=p(k|M,P')$ for every measurement event $(k|M)$; likewise, two measurement events are operationally equivalent, $(k|M)\simeq(k'|M')$, if they occur with the same probability for every preparation. In most of what follows, as is customary, we work directly with  the equivalence classes, represented by \emph{states} $s\leftrightarrow[P]$ and \emph{effects} $e\leftrightarrow[k|M]$; for instance, in quantum theory, an equivalence class of preparations of a qubit might be represented by the state $\ketbra{+}{+}$. The distinction between procedures within an equivalence classes will be needed again only when we discuss (non)contextuality, in Sec.~\ref{sec:Contextuality}.} {This minimal structure is at the heart of much of quantum information: state discrimination, the contextuality-powered tasks we discuss later, and many communication and cryptographic protocols are all naturally phrased in it. Both the generalized-probabilistic-theory description of a system and the notion of (non)contextuality we use are tailored to this setting.}

\begin{figure}[h]
\centering

{%
\begin{tikzpicture}[>={Stealth[length=2mm]}, font=\small,
  device/.style={draw, thick, rounded corners=2pt, minimum width=1.95cm, minimum height=0.9cm, align=center},
  proc/.style={circle, fill=black, inner sep=1.2pt},
  class/.style={draw, dashed, rounded corners=5pt, inner sep=3.5pt}]
  \node[device] (prep) at (0,0) {preparation};
  \node[device] (meas) at (4.3,0) {measurement};
  \draw[->, thick] (prep) -- node[above]{\footnotesize system} (meas);
  \draw[->, thick] (prep.north) ++(0,0.5) node[above]{\footnotesize $P$} -- (prep.north);
  \draw[->, thick] (meas.north) ++(0,0.5) node[above]{\footnotesize $M$} -- (meas.north);
  \draw[->, thick] (meas.east) -- ++(0.65,0) node[right]{\footnotesize $k$};
  \node[proc] (p1) at (-1.0,-1.8) {};
  \node[inner sep=0pt] (p1l) at (-1.0,-2.1) {\scriptsize $P_1$};
  \node[proc] (p2) at (-0.25,-1.8) {};
  \node[inner sep=0pt] (p2l) at (-0.25,-2.1) {\scriptsize $P_2$};
  \node[proc] (p3) at (0.85,-1.8) {};
  \node[inner sep=0pt] at (0.85,-2.1) {\scriptsize $P_3$};
  \node[class, fit=(p1)(p2)(p1l)(p2l)] (sclass) {};
  \node[anchor=south] at ($(sclass.north)+(-0.2,0)$) {\scriptsize $s=[P_1]=[P_2]$};
  \draw[->, gray, shorten >=2pt] (sclass.north east) to[bend right=25] (prep.south);
  \node[proc] (e1) at (3.35,-1.8) {};
  \node[inner sep=0pt] (e1l) at (3.35,-2.1) {\scriptsize $(k|M)$};
  \node[proc] (e2) at (4.6,-1.8) {};
  \node[inner sep=0pt] (e2l) at (4.6,-2.1) {\scriptsize $(k'|M')$};
  \node[proc] (e3) at (5.9,-1.8) {};
  \node[inner sep=0pt] at (5.9,-2.1) {\scriptsize $(k''|M'')$};
  \node[class, fit=(e1)(e2)(e1l)(e2l)] (eclass) {};
  \node[anchor=south] at ($(eclass.north)+(0.25,0)$) {\scriptsize $e=[k|M]=[k'|M']$};
  \draw[->, gray, shorten >=2pt] (eclass.north west) to[bend left=25] (meas.south);
  \node at (2.15,-2.9) {\footnotesize $p(k\,|\,M,P)=B(e,s)$};
\end{tikzpicture}}
\caption{\justifying {A prepare-and-measure scenario. A preparation procedure $P$ outputs a physical system, which is sent to a measurement procedure $M$ that returns an outcome $k$, with statistics $p(k\,|\,M,P)$. Operationally equivalent procedures (dashed regions) are grouped into equivalence classes: states $s=[P]\in\bar\Omega$ and effects $e=[k|M]\in\mathcal{E}$, in terms of which the statistics read $B(e,s)$. In most of this work we deal directly with states and effects; the underlying procedures become relevant again in Sec.~\ref{sec:Contextuality}.}}
\label{fig:PrepareAndMeasure}
\end{figure}

\subsection{Generalized probabilistic theories}

The framework of generalized probabilistic theories consists of a simple mathematical description that  nevertheless allows for the study of many possible physical theories (quantum and classical theories as examples). For comprehensive introductions we refer the reader to specialized works on GPTs, such as~\cite{barrett_GPTsInformationProcessing,Mueller_GPTLesHouche2021,Plavala_GPTIntro2023,Schmid_ShadowsSubsystems2025}. Here we present only a minimal background (mainly the notion of a GPT system), following the conventions of Ref.~\cite{Schmid_ShadowsSubsystems2025}, sufficient to state and understand our results.

{The GPT description of a system encodes the operational data of a prepare-and-measure scenario, focusing only on the aspects of the procedures that are relevant for the probabilities, namely, the states (equivalence classes of preparations) and effects (equivalence classes of measurement events). These are represented by vectors in real vector spaces, in terms of which the probabilities become a bilinear function. Concretely, a GPT system} specifies: (i) a compact convex set of normalized states $\bar{\Omega}$ {(each state $s$ being a vector that represents an equivalence class of preparations)}; (ii) a convex closed effect space $\mathcal{E}$ {(each effect $e=[k|M]$ a vector representing an equivalence class of measurement events)}; and (iii) a bilinear probability rule $B:\mathsf{Span}[\mathcal{E}]\times\mathsf{Span}[\bar{\Omega}]\to\mathbb{R}$ with $B(e,s)\in[0,1]$ for all $(e,s)\in\mathcal{E}\times\bar{\Omega}${, which recovers the observed statistics as $B(e,s)=p(k\,|\,M,P)$ for any representatives $P\in s$ and $(k|M)\in e$ (cf.~Fig.~\ref{fig:PrepareAndMeasure})}. Convexity of the state (effect) set captures the idea that mixtures of valid states (effects) should be valid; compactness ensures they are bounded and closed --  closedness entails that procedures we can approximate arbitrarily well should be allowed.

Among the effects, there is a distinguished one, the \emph{deterministic effect} $u\in\mathcal{E}$, which returns the probability $1$ for every normalized state: $B(u,s)=1$ for all $s\in\bar{\Omega}$. Operationally, $u$ is simply asking: \emph{is the system there?}\footnote{
It is also possible (though less common) to consider theories in which $u$ is non-unique; see, e.g.,~\cite{Chiribella_Purification2010}.
}
Additionally, for any $e\in\mathcal{E}$, the complement $u-e$ is also a valid effect ($u-e\in\mathcal{E}$), and a measurement is any finite family $\{e_i\}\subset\mathcal{E}$ with $\sum_i e_i=u$.

States and effects \emph{separate} each other with respect to $B$: for any pair of distinct states there exists an effect that distinguishes them, and vice versa. Equivalently, the bilinear form $B(\cdot,\cdot)$ is nondegenerate.\footnote{
Much of the GPT literature adopts a dual-space description, where effects live in the dual of $\mathrm{Span}[\Omega]$. Ref.~\cite{Schmid_ShadowsSubsystems2025} shows that when the separation property holds (as for full GPT systems or relatively tomographic fragments~\cite{Selby_2023}, and throughout this manuscript), the two descriptions are equivalent: each can be translated into the other.
}

Bilinearity of $B$ together with $B(u,s)=1$ immediately implies that $\bar{\Omega}$ cannot contain the zero vector: since $B$ is linear in its second slot, $B(u,0)=0$, which is incompatible with being a normalized state. One can also consider the cone
generated by the convex hull of $\bar{\Omega}$ and the $0$ vector, whose elements correspond to \emph{subnormalized
states} (though these will not be essential to our work).
Therefore, normalized states live on an affine hyperplane which does not pass through the origin. This has an important consequence for us: if $\bar{\Omega}$ has affine dimension $d$, then $\dim\!\big(\mathsf{Span}[\bar{\Omega}]\big)=d+1$, so one can always select (at most) $d\!+\!1$ linearly independent normalized states in $\bar{\Omega}$. Because all normalized states lie on this hyperplane, linear and affine (in)dependence coincide for them; throughout, we phrase everything in terms of linear (in)dependence.

As a concrete illustration, consider a qubit system. The state space $\bar{\Omega}$ is the Bloch ball (density operators with unit trace), which is three-dimensional. It is usual to represent quantum normalized states as density matrices, and the space spanned by all qubit states, $\mathsf{Span}[\bar{\Omega}]$, is the four-dimensional real vector space of $2\times2$ Hermitian operators. Therefore, one can have up to $4$ different qubit states making up a linearly independent set {(linear independence being understood among the states as vectors in $\mathsf{Span}[\bar\Omega]$, i.e.\ among the density matrices)} (for example, $\ketbra{0}{0},\ketbra{1}{1},\ketbra{+}{+},\ketbra{+y}{+y}$, where the first two are eigenstates of $\sigma_Z$ and the other two of $\sigma_X,\sigma_Y$, respectively). Effects are POVM elements with deterministic effect $u=\mathds{1}$, and note that complements $\mathds{1}-E$ are always valid POVM elements. The quantum probability rule is Born's rule $B(E,\rho)=\mathrm{Tr}[E\rho]$ and the states and effects separate each other through the Born rule: for every pair of distinct states, there exists an effect (POVM element) for which the probabilities are distinct, and vice-versa. Finally, note that indeed both $\bar{\Omega}$ and $\mathcal{E}$ are compact and convex. Other GPT systems arise just by choosing different $(\bar{\Omega},\mathcal{E},B)$ with the same structural properties. See Fig.~\ref{fig:GPTSystemsExample} for an example.

\begin{figure}[h]
\centering
\begin{subfigure}{\columnwidth}
\caption{}
\begin{minipage}{0.3\columnwidth}
\centering
\tdplotsetmaincoords{80}{180}

\begin{tikzpicture}[tdplot_main_coords, scale=0.6]

\def\r{1.5}
\def\h{2}

\coordinate (T) at (0,0,-\h);

\fill[gray!20] (0,0,0) circle (\r);
\draw (0,0,0) circle (\r);
\draw[->] (0,0,0.8) -- (0,0,0);
\draw (0,0,0.8) circle (0) node[above] {$\overline{\Omega}_{\mathrm{rebit}}$};

\draw (\r,0,0) -- (T);
\draw (-\r,0,0) -- (T);

\end{tikzpicture}
\end{minipage}
\begin{minipage}{0.3\columnwidth}
\centering
\tdplotsetmaincoords{80}{180}

\begin{tikzpicture}[tdplot_main_coords, scale=0.6]

\def\r{1.5}
\def\h{2}

\coordinate (B) at (0,0,-\h);
\coordinate (T) at (0,0,\h);

\draw (0,0,0) circle (\r);
\draw (T) circle (0.01) node[above right] {$\mathcal{E}_{\mathrm{rebit}}$};

\draw (\r,0,0) -- (T);
\draw (-\r,0,0) -- (T);
\draw (\r,0,0) -- (B);
\draw (-\r,0,0) -- (B);

\end{tikzpicture}
\end{minipage}
\end{subfigure}
\vspace*{-0.5em}
\begin{subfigure}{\columnwidth}
\caption{}
\begin{minipage}{0.3\columnwidth}
\centering
\tdplotsetmaincoords{80}{30}

\begin{tikzpicture}[tdplot_main_coords, scale=0.6]

\def\l{1.5}
\def\h{2}

\coordinate (V1) at (\l,\l,0);
\coordinate (V2) at (-\l,\l,0);
\coordinate (V3) at (-\l,-\l,0);
\coordinate (V4) at (\l,-\l,0);
\draw (V1) -- (V2) -- (V3) -- (V4) -- cycle;
\fill[gray!20] (V1) -- (V2) -- (V3) -- (V4) -- cycle;
\draw[->] (0,0,0.8) -- (0,0,0);
\draw (0,0,0.8) circle (0) node[above] {$\overline{\Omega}_{\mathrm{gbit}}$};

\coordinate (B) at (0,0,-\h);

\draw (B) -- (V1);
\draw (B) -- (V2);
\draw (B) -- (V3);
\draw (B) -- (V4);



\end{tikzpicture}
\end{minipage}
\begin{minipage}{0.3\columnwidth}
\centering
\tdplotsetmaincoords{80}{30}

\begin{tikzpicture}[tdplot_main_coords, scale=0.6]

\def\l{1.5}
\def\h{2}

\coordinate (V1) at (\l,\l,0);
\coordinate (V2) at (-\l,\l,0);
\coordinate (V3) at (-\l,-\l,0);
\coordinate (V4) at (\l,-\l,0);
\draw (V1) -- (V2) -- (V3) -- (V4) -- cycle;

\coordinate (B) at (0,0,-\h);

\draw (B) -- (V1);
\draw (B) -- (V2);
\draw (B) -- (V3);
\draw (B) -- (V4);

\coordinate (T) at (0,0,\h);

\draw (T) -- (V1);
\draw (T) -- (V2);
\draw (T) -- (V3);
\draw (T) -- (V4);

\draw (T) circle (0) node[above right] {$\mathcal{E}_{\mathrm{gbit}}$};



\end{tikzpicture}
\end{minipage}
\end{subfigure}
\caption{\justifying Two GPT systems with state and effect spaces embedded in $\mathbb{R}^3$.
(a) A GPT realization using only qubit states with real density matrices and similarly for its effects (the rebit~\cite{Wootters_RebitSharing2010}). Its normalized state space is the $XZ$ equator of the Bloch ball.
(b) The gbit~\cite{barrett_GPTsInformationProcessing}, whose normalized state space is a square and whose effects satisfy $0 \le B(e,s) \le 1$.
The gbit is strongly nonclassical: it is maximally contextual and a pair of gbits can violate CHSH up to the algebraic maximum~\cite{Schmid_ShadowsSubsystems2025, Janotta_LocalStateSpacePolygon2011}.}
\label{fig:GPTSystemsExample}
\end{figure}
\subsection{Performance in state discrimination tasks}

In SD tasks, one party prepares a GPT system according to a predefined set\footnote{Strictly speaking, an SD task uses a \emph{family} of states, ie, it allows some states to coincide. However, this is usually not a case of interest, since discriminating among these identical states is just a random guess. We therefore consider just a set, where $s_k\neq s_{k'}$ for $k\neq k'$.} of available states $\{s_k\}_k\subset\bar{\Omega}$ and an a priori distribution for the choice of each state, $\{p_k\}_k$. Another party then implements a measurement to try to guess which state was actually prepared. In the minimum-error version that we focus on, that party must always take a guess on which state was prepared.
 The crucial questions in SD tasks are: $i)$ given an ensemble $\{(p_k,s_k)\}_k$, what is the maximum probability of correctly guessing which state was prepared,
\begin{align}
    P_{\rm guess}[\{(p_k,s_k)_k\}]:= \max_{\{F_k\}}\sum_{k}p_kB(F_k,s_k);
\end{align}
$ii)$ which measurement (represented by the collection $\{F_k\}\subset\mathcal{E}$, with as much outcomes as there are states) provides such a maximum success.

Analytical solutions to both questions are known only in limited scenarios in quantum theory. For example, a geometric analytical procedure exists for ensembles of qubit states~\cite{Deconinck_QubitStateDiscrimination2010}. Furthermore, closed formulas for the optimal guessing probability $P_{\rm guess}$,  are available only in particularly simple scenarios, such as discrimination between two quantum states~\cite{Helstrom1967} or certain small ensembles of qubit states~\cite{Donghoonha_2021SDQubits}. Even fewer analytical results are known in the more general setting of GPT systems~\cite{Bae_StructureSDGPT}. In general, determining $P_{\rm guess}$ amounts to solving an optimization problem that can be formulated as a semidefinite program~\cite{EldarSDPforSD,Bae_StructureSDQuantum}. Although this formulation makes the problem efficiently solvable numerically, it also suggests that closed analytical formulas for $P_{\rm guess}$ are not often found because these are unlikely to exist in general~\cite{Deconinck_QubitStateDiscrimination2010}. In this context, general  bounds on $P_{\rm guess}$, depending on the ensemble $\{(p_k,s_k)\}_k$ that characterizes the task, are extremely valuable; for example, in the case of quantum systems, the so-called `pretty good measurement'  provides a good approximation for $P_{\rm guess}$~\cite{HausladenWootters1994,BarnumKnill2002}. Despite decades of research, and given how  prevalent SD tasks are in quantum information, new bounds are still being proposed~\cite{montanaro2019prettysimpleboundsquantum}.

Recall that the problem to which we devote is: how well does a given state set $\{s_k\}$ perform in a task? Or, similarly, given a desired performance, which conditions should hold for the state set? Applying this perspective to SD tasks,  we should consider all and only those tasks in which every state in the set has a non-null chance of being prepared -- otherwise, one would be studying the performance of a strict subset of states instead of the performance of $\{s_k\}$. Mathematically, given the set of states, we consider its performance in all SD tasks $\{(p_k,s_k)\}$ with $p_k>0$ for all $k$. An important consequence of that is that the average state $\bar{s}=\sum_kp_ks_k$ of an SD task $\{(p_k,s_k)\}$, is an element of $\mathsf{ConvHull}[\{s_k\}]\subset\mathsf{Span}[\{s_k\}]$. As a consequence, we get the following:

{
\begin{rmk}
\label{rmk:AffIndepDec}
Consider an SD task $\{(p_k,s_k)\}_k$ with $p_k>0$ for all $k$ and $\{s_k\}_k$ a linearly dependent set of states of a GPT system, with average state $\bar s:=\sum_k p_k s_k$. Since $\bar s\in\mathsf{ConvHull}[\{s_k\}]$, the average state always admits (typically several) \textbf{linearly independent decompositions}: for each linearly independent subset $S_{\rm LI}\subseteq\{s_k\}$ with $\bar s\in\mathsf{ConvHull}[S_{\rm LI}]$ there is a unique probability distribution $\{q_k\}_k$ supported on $S_{\rm LI}$ with
\begin{align}
\label{eq:AffIndDecompSbar}
    \bar s=\sum_k q_k s_k.
\end{align}
We refer to such a $\{q_k\}$ as a linearly independent decomposition of $\bar s$ from $\{s_k\}$; its support recovers $S_{\rm LI}$, so distinct decompositions have distinct supports.
Any such $\{q_k\}$ necessarily differs from the prior $\{p_k\}$: since $\{s_k\}$ is, by assumption, linearly dependent, every linearly independent subset $S_{\rm LI}$ is a \emph{proper} subset of $\{s_k\}$, so $\{q_k\}$ vanishes on at least one index, whereas $p_k>0$ for all $k$. The total variation distance between $\{q_k\}$ and $\{p_k\}$ captures this difference,
\begin{align}
\label{eq:Vdef}
    V(\{q_k\},\{p_k\}):=\tfrac12\sum_k|q_k-p_k|,
\end{align}
and is therefore strictly positive.
\end{rmk}
}
See Fig.~\ref{fig:LIDecompositions} for an example of different linearly independent decompositions for the same $\bar{s}$ and set $\{s_k\}$.

\begin{figure}[h]
\centering

\begin{subfigure}{0.45\columnwidth}
\caption{}
\centering
\begin{tikzpicture}[scale=1.5,baseline={(0,0)}]
    \draw (0,0) circle (1.0); 
    \fill (0.623, 0.782) circle (1pt) node[above right] {$s_{1}$};
    \fill (-0.222, 0.974) circle (0pt) node[above left] {$s_{2}$};
    \fill ($0.9*(-0.222, 0.974)$) circle (1pt);
    \fill (-0.900, 0.433) circle (0pt) node[above left] {$s_{3}$};
    \fill ($0.8*(-0.900, 0.433)$) circle (1pt);
    \fill (-0.900, -0.433) circle (1pt) node[below left] {$s_{4}$};
    \fill (-0.222, -0.974) circle (0pt) node[below left] {$s_{5}$};
    \fill ($0.9*(-0.222, -0.974)$) circle (1pt);
    \fill (0.623, -0.781) circle (0pt) node[below right] {$s_{6}$};
    \fill ($0.85*(0.623, -0.781)$) circle (1pt);
    \fill (1, 0) circle (0pt) node[right] {$s_{7}$};
    \fill ($0.9*(1, 0)$) circle (1pt);
    \draw (0.623, 0.782) --  ($0.9*(-0.222, 0.974)$) -- ($0.8*(-0.900, 0.433)$) -- (-0.900, -0.433) -- ($0.9*(-0.222, -0.974)$) -- ($0.85*(0.623, -0.781)$) -- ($0.9*(1, 0)$) -- cycle;
    \fill ($0.7*(-0.623, -0.781)$) circle (1pt) node[above] {$s_{8}$};
    \fill ($0.3*(0.623, 0.782)$) circle (1pt) node[above] {$\overline{s}$};
\end{tikzpicture}
\end{subfigure}
\hfill
\begin{subfigure}{0.45\columnwidth}
\caption{}
\centering
\begin{tikzpicture}[scale=1.5,baseline={(0,0)}]
    \draw (0,0) circle (1.0); 
    \fill (0.623, 0.782) circle (1pt) node[above right] {$s_{1}$};
    \fill (-0.222, 0.974) circle (0pt) node[above left] {$s_{2}$};
    \fill ($0.9*(-0.222, 0.974)$) circle (1pt);
    \fill (-0.900, 0.433) circle (0pt) node[above left] {$s_{3}$};
    \fill ($0.8*(-0.900, 0.433)$) circle (1pt);
    \fill (-0.900, -0.433) circle (1pt) node[below left] {$s_{4}$};
    \fill (-0.222, -0.974) circle (0pt) node[below left] {$s_{5}$};
    \fill ($0.9*(-0.222, -0.974)$) circle (1pt);
    \fill (0.623, -0.781) circle (0pt) node[below right] {$s_{6}$};
    \fill ($0.85*(0.623, -0.781)$) circle (1pt);
    \fill (1, 0) circle (0pt) node[right] {$s_{7}$};
    \fill ($0.9*(1, 0)$) circle (1pt);
    \fill ($0.7*(-0.623, -0.781)$) circle (1pt) node[above] {$s_{8}$};
    \fill ($0.3*(0.623, 0.782)$) circle (1pt) node[above] {$\overline{s}$};
    \draw[color = red] (0.623, 0.782) -- ($0.9*(-0.222, 0.974)$) -- ($0.85*(0.623, -0.781)$) -- cycle;
    \draw[color = blue]  ($0.9*(-0.222, 0.974)$) -- (-0.900, -0.433) -- ($0.9*(1, 0)$) -- cycle;
    \draw[color = red,->] ($1.05*(0.21, 0.829)$) -- ($1.5*(0.21, 0.829)$);
    \fill[color = red] ($1.5*(0.21, 0.829)$) circle (0pt) node[above] {$S$};
    \draw[color = blue,->] ($(0.349+0.05, 0.438)$) -- ($(0.349+0.7, 0.438)$);
    \fill[color = blue] ($(0.349+0.7, 0.438)$) circle (0pt) node[right] {$S'$};
\end{tikzpicture}
\end{subfigure}

\caption{\justifying Consider an SD task for the rebit state space, as defined in Fig~\ref{fig:GPTSystemsExample}. In (a) we depict some set of states $\{s_{k}\}_{k=1}^{8}$ along with the average state $\overline{s}$ for that task. The state  $\overline{s}$ admits linearly independent decompositions from both $S_{\rm LI} = \{s_{1},s_{2},s_{6}\}$ and $S^{'}_{\rm LI} = \{s_{2},s_{4},s_{7}\}$. }
\label{fig:LIDecompositions}

\end{figure}

{In what follows we develop the technical results: the SD bounds and their main properties. These rest on the purely geometric notion of linear (in)dependence. The main physical significance of the results and the most important consequence of the bounds appear in Sec.~\ref{sec:Contextuality}, where the connection to contextuality is established and discussed.}

\section{Bounds for State Discrimination from Linear Independence}
\label{sec:Technical}

We now state and analyze our bounds from a geometric point of view. Throughout, we consider SD tasks that use all states in a subset $\{s_k\}\subset\Omega$ of some GPT system, with strictly positive priors $\{p_k\}$ (as we want to analyze the performance of the full set in SD tasks) and average state ${\bar s=\sum_k p_k s_k}$. {The goal of this section is to show that a single geometric feature of the set, namely, whether it is linearly dependent, already limits how well its states can be discriminated. The ideal case is immediate, ie  perfect discrimination requires the states to be linearly independent, but what we show here is the quantitative, robust version of this fact, namely that dependence forces the success probability strictly below $1$, by at least an amount we can compute from the geometry alone. We stress that at this stage linear dependence is invoked only as a convenient, testable geometric hypothesis; its physical motivation comes later, through its link to contextuality (Sec.~\ref{sec:Contextuality}).}

{
\begin{restatable}[Upper bound on $P_{\mathrm{guess}}$ from linear dependence]{theorem}{ThmLD}
\label{BoundsLinDepSD}
Let $\{s_k\}_k\subset\bar{\Omega}$ be a finite, linearly dependent set of states in a GPT system, and consider \emph{any} SD task $\{(p_k,s_k)\}$ that uses \emph{all} of them (i.e., $p_k>0$ for all $k$), with average state $\bar s:=\sum_k p_ks_k$.
For any linearly independent decomposition $\{q_k\}$ of $\bar s$ obtained from a linearly independent subset $S_{\rm LI}\subseteq\{s_k\}$ (Remark~\ref{rmk:AffIndepDec}), the optimal guessing probability obeys
\begin{align}
\label{eq:BoundsLD}
    P_{\mathrm{guess}}\ &\le\ 1-\mathsf P_{\{q_k\}}, \qquad\text{with}\\
    \mathsf P_{\{q_k\}}\ &:=\ p_{\min}\!\left(1+\frac{p_{\min}}{V(\{q_k\},\{p_k\})}\right)^{\!-1},\nonumber
\end{align}
and $p_{\min}:=\min_k p_k$.
\end{restatable}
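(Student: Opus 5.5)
The plan is to fix an optimal measurement $\{F_k\}$ and translate the linear relation that comes from the decomposition of $\bar s$ into an inequality on the per-state errors. Write $x_k:=B(F_k,s_k)$ and $e_k:=1-x_k\ge 0$. Let $\varepsilon:=1-P_{\rm guess}=\sum_k p_ke_k$ be the total error. The target, rearranged, is
\begin{align*}
\varepsilon\,\bigl(V+p_{\min}\bigr)\ \ge\ p_{\min}V ,
\end{align*}
where $V:=V(\{q_k\},\{p_k\})>0$ by Remark~\ref{rmk:AffIndepDec}.

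Set $r_k:=p_k-q_k$. Then $\sum_k r_ks_k=\bar s-\bar s=0$ and $\sum_k r_k=0$, so $\sum_{k}|r_k|=2V$. Split the indices into $P=\{k:r_k>0\}$ and $N=\{k:r_k<0\}$. The set $P$ is nonempty, because it contains every index outside $S_{\rm LI}$. Let $F_P:=\sum_{j\in P}F_j$. Since $\{F_k\}$ is a measurement, $F_P\le u$ and $F_P\le u-F_k$ for every $k\in N$. Applying $B(F_P,\cdot)$ to $\sum_k r_ks_k=0$ gives
\begin{align*}
\sum_{k\in P} r_k\,B(F_P,s_k)=\sum_{k\in N}|r_k|\,B(F_P,s_k).
\end{align*}
On the left, $B(F_P,s_k)\ge x_k$ for $k\in P$. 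On the right, $B(F_P,s_k)\le 1-x_k=e_k$ for $k\in N$. Using $\sum_{k\in P}r_k=V$, this yields the key inequality
\begin{align*}
V\ \le\ \sum_{k\in P}r_ke_k+\sum_{k\in N}|r_k|e_k=\sum_k |r_k|\,e_k .
\end{align*}
Informally, the states on which $\{p_k\}$ and $\{q_k\}$ disagree must together carry at least $V$ worth of weighted error.

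Next I would convert this into a bound on $\varepsilon$. The available constraints are $0\le e_k\le 1$, $p_ke_k\le\varepsilon$ (so $e_k\le \varepsilon/p_{\min}$), $|r_k|\le p_k$ for $k\in P$ (because $q_k\ge0$), and $\sum_k p_ke_k=\varepsilon$. The problem is a small linear program: maximize $\sum_k|r_k|e_k$ subject to these constraints, and require the maximum to be at least $V$. The crude estimate $e_k\le\varepsilon/p_{\min}$ alone only gives $\varepsilon\ge p_{\min}/2$. The aim is to use $p_{\min}$ to cap each individual error and $V$ to cap the total weight $\sum_{k\in N}|r_k|=V$, so that the $P$-part and the $N$-part are bounded differently. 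Combining them in a weighted way (roughly, splitting according to whether an $N$-index's error saturates at $1$ or at $\varepsilon/p_{\min}$) should produce $\varepsilon\ge p_{\min}V/(V+p_{\min})$, which is exactly $\mathsf P_{\{q_k\}}$.

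The main obstacle is this last step: the sharp constant. The derivation above is lossy, and the precise form $p_{\min}(1+p_{\min}/V)^{-1}$ suggests a different grouping of the indices. If the linear-program route does not close, the first alternative is to apply a single effect, $F_P$ or its complement $u-F_P$, to both sides of $\bar s=\sum_kp_ks_k=\sum_kq_ks_k$. This gives $\sum_kp_kB(F_P,s_k)=\sum_kq_kB(F_P,s_k)$. The left side is at least $\sum_{k\in P}p_kx_k$, while the right side is controlled by the $q$-mass on $P$, which equals $\sum_{k\in P}p_k-V$. I would then optimize the resulting inequality over the single index achieving $p_{\min}$ among $P\setminus S_{\rm LI}$, whose error is tied to $\varepsilon$ through $p_{\min}e_k\le\varepsilon$.
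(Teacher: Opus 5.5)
Your route works, and the step you flag as the main obstacle is not one. It closes in one line from constraints you already listed, so you need no case split, no linear program and no fallback. The trick is to bound the two halves of your key inequality differently.
\begin{itemize}
\item On $P$, use $r_k=p_k-q_k\le p_k$ to get $\sum_{k\in P}r_ke_k\le\sum_{k\in P}p_ke_k\le\varepsilon$.
\item On $N$, use $e_k\le\varepsilon/p_k\le\varepsilon/p_{\min}$ together with $\sum_{k\in N}|r_k|=V$ to get $\sum_{k\in N}|r_k|e_k\le V\varepsilon/p_{\min}$.
\end{itemize}
Hence
\[
V\ \le\ \varepsilon\Bigl(1+\frac{V}{p_{\min}}\Bigr)
\quad\Longleftrightarrow\quad
\varepsilon\ \ge\ \frac{p_{\min}V}{V+p_{\min}}=\mathsf P_{\{q_k\}},
\]
which is exactly the sharp constant. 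The loss down to $p_{\min}/2$ that you observed comes only from also applying the cap $\varepsilon/p_{\min}$ on the $P$-part. There, $r_k\le p_k$ is the right estimate, and your derivation is not lossy at the level the theorem needs.

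The paper reaches the same place differently. It writes $P_{\rm error}=\sum_kB(F_k,\bar s-p_ks_k)$, where each term is nonnegative, and keeps only the outcomes in $Q_+=\{k:q_k\ge p_k\}$ (your $N$ plus ties). It then substitutes $\bar s=\sum_iq_is_i$ and discards the cross terms, giving $P_{\rm error}\ge\sum_{k\in N}|r_k|x_k\ge\ubar{\lambda}V$. It closes with the separate estimate $P_{\rm error}\ge p_{\min}(1-\ubar{\lambda})$, which is your $e_k\le\varepsilon/p_{\min}$ at the worst index.

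You instead evaluate a single coarse-grained effect $F_P$ on the linear relation $\sum_k(p_k-q_k)s_k=0$. Your key inequality is equivalent to $\sum_{k\in N}|r_k|x_k\le\sum_{k\in P}r_ke_k$. Since the right-hand side is at most $\varepsilon$, this intermediate is slightly stronger than the paper's: it charges only errors made on states where $p_k>q_k$, weighted by $p_k-q_k$. For the theorem as stated, both give the same final bound.

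The paper's per-outcome bookkeeping has its own advantage. It keeps the cross terms $\sum_{k\in Q_+}\sum_{i\neq k}q_iB(F_k,s_i)$ explicit, and these are what it builds on in its refinement for sets containing non-extremal states.
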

The proof is given in Appendix~\ref{app:ProofTheorem1Corollaries1And2}.

The above upper bound holds for every linearly independent decomposition of $\bar s$ from $\{s_k\}$, that is, given a linearly independent decomposition $\{q_k\}$ of $\bar{s}$, a bound on $P_{\rm guess}$ can be calculated. Then, minimizing the bounds over all such decompositions (ie, maximizing $\mathsf P_{\{q\}}$ over $\{q_k\}$) yields a universal constraint:
\begin{restatable}{corollary}{CorMaxAndHalf}
\label{corollary:MaxAndHalfBounds}
Fix a linearly dependent set $\{s_k\}\subset\bar{\Omega}$ in some GPT system. For any SD task $\{(p_k,s_k)\}$that uses all the states in the set,
\begin{align}
P_{\mathrm{guess}} \ \le\ 1-\max_{\{q_k\}} \mathsf P_{\{q_k\}}\ \le\ 1-\tfrac12\,p_{\min},
\end{align}
where the maximum is over all linearly independent decompositions $\bar s=\sum_k q_k s_k$ obtained from $\{s_k\}$.
\end{restatable}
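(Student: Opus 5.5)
The first inequality follows directly from Theorem~\ref{BoundsLinDepSD}; for the second it suffices to show that every linearly independent decomposition has $V\ge p_{\min}$.

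\emph{First inequality.} Theorem~\ref{BoundsLinDepSD} gives $P_{\rm guess}\le 1-\mathsf P_{\{q_k\}}$ separately for each linearly independent decomposition $\{q_k\}$ of $\bar s$. I will take the best of these bounds. Remark~\ref{rmk:AffIndepDec} shows that each decomposition is determined by its support $S_{\rm LI}\subseteq\{s_k\}$. Since $\{s_k\}$ is finite, there are only finitely many decompositions, so the maximum over $\{q_k\}$ is attained.

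The set over which we maximize is also nonempty. The states lie on an affine hyperplane not containing the origin, and $\bar s\in\mathsf{ConvHull}[\{s_k\}]$. Carathéodory's theorem therefore gives a subset of affinely independent $s_k$ whose convex hull contains $\bar s$. By the remark that linear and affine independence coincide for normalized states, this subset is linearly independent.

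\emph{Reduction for the second inequality.} The map $V\mapsto p_{\min}\bigl(1+p_{\min}/V\bigr)^{-1}$ is increasing in $V>0$. Moreover, it is at least $\tfrac12 p_{\min}$ exactly when $p_{\min}/V\le 1$, that is, when $V\ge p_{\min}$. So it is enough to prove $V(\{q_k\},\{p_k\})\ge p_{\min}$ for every linearly independent decomposition. Applying this to a maximizing decomposition then gives $\max_{\{q_k\}}\mathsf P_{\{q_k\}}\ge \tfrac12 p_{\min}$, and hence the second inequality.

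\emph{Key step: $V\ge p_{\min}$.} Both $\{q_k\}$ and $\{p_k\}$ are probability distributions, so $\sum_k (p_k-q_k)=0$. Hence the positive and negative parts of $p-q$ have equal mass, and
\begin{align}
V(\{q_k\},\{p_k\})=\sum_k (p_k-q_k)^+ .
\end{align}
By Remark~\ref{rmk:AffIndepDec}, $S_{\rm LI}$ is a proper subset of the linearly dependent set $\{s_k\}$. So there is an index $j$ with $q_j=0$. Keeping only the $j$-th term of the nonnegative sum gives
\begin{align}
V\ \ge\ p_j-q_j\ =\ p_j\ \ge\ p_{\min}.
\end{align}

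No step is a serious obstacle. The only points needing care are that the maximum exists, which the finiteness of the supports settles, and the rewriting of $V$ as the positive-part sum, which uses that both distributions are normalized.
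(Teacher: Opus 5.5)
Your proof is correct and follows the paper's argument. The first inequality comes from taking the best of the Theorem~\ref{BoundsLinDepSD} bounds, and the second from an index $j$ with $q_j=0$ giving $V\ge p_j\ge p_{\min}$, combined with monotonicity of $\mathsf P_{\{q_k\}}$ in $V$. Your extra checks that the maximum ranges over a finite, nonempty set (via Carath\'eodory and the coincidence of linear and affine independence for normalized states) are details the paper leaves implicit.
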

An immediate consequence of Corollary~\ref{corollary:MaxAndHalfBounds} is that, once an SD task is fixed (i.e., the prior distribution is given), the optimal guessing probability for any linearly dependent state set
is bounded away from unity in every GPT.

The bounds obey a useful invariance property.
\begin{restatable}[Invariance of the bounds under injective linear maps]{corollary}{BoundInvarianceInjective}
\label{corollary:invarianceBoundsINjectiveLinearM}
The bounds of Theorem~\ref{BoundsLinDepSD} and Corollary~\ref{corollary:MaxAndHalfBounds} are invariant under the action of injective linear maps on $\{s_k\}$\footnote{In the language of GPT systems in Ref.~\cite{Schmid_ShadowsSubsystems2025} , an injective linear map from the states of a GPT system G to the states of another GPT system H is called a \emph{faithful state-map}, though their definition also extends to subnormalized states.}. In other words:
let $\{s_k\}\subset\bar\Omega$ be normalized states of a GPT system and let $\{t_k\}\subset\bar\Omega'$ be normalized states of a (possibly different) GPT system with $t_k=M(s_k)$ for some injective linear map $M:\mathsf{Span}\{s_k\}\to\mathsf{Span}\{t_k\}$. Then,  the numerical bounds of Theorem~\ref{BoundsLinDepSD}, its corollaries, and Lemma~\ref{lemma:ConvexlyDependent} {(derived in Appendix~\ref{app:UniformAndConvexlyDep}; see the remark in Appendix~\ref{app:affine-invariance} for the results stated only there)} computed for SD task $\{(p_k,s_k)\}$ coincide with those computed for $\{(p_k,t_k)\}$, for every prior $(p_k)_k$ with $p_k>0$ for all $k$.
\end{restatable}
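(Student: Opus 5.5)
The plan is to observe that every quantity entering the bounds depends on the states only through the linear relations among them, and that injective linear maps preserve exactly these relations. The bound of Theorem~\ref{BoundsLinDepSD} is built from three ingredients. The first is $p_{\min}$, which depends only on the prior. The second is the collection of admissible linearly independent decompositions $\{q_k\}$ of $\bar s$; by Remark~\ref{rmk:AffIndepDec}, each is labelled by a linearly independent subset $S_{\rm LI}\subseteq\{s_k\}$ with $\bar s\in\mathsf{ConvHull}[S_{\rm LI}]$. The third is $V(\{q_k\},\{p_k\})$, which depends only on the two distributions. So it suffices to show that the map $\{q_k\}\mapsto\{q_k\}$ is a bijection between the admissible decompositions for $\{(p_k,s_k)\}$ and those for $\{(p_k,t_k)\}$. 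Once this is established, the sets of numbers $\{\mathsf P_{\{q_k\}}\}$ coincide. Hence the bound of Theorem~\ref{BoundsLinDepSD} for each decomposition coincides, and so does the maximum in Corollary~\ref{corollary:MaxAndHalfBounds}. The weaker bound $1-\tfrac12 p_{\min}$ is trivially invariant.

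The key steps, in order, are as follows.

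\begin{enumerate}[(i)]
\item Linear dependence is preserved. For real coefficients $c_k$, linearity gives $\sum_k c_k t_k=M(\sum_k c_k s_k)$. Injectivity then gives $\sum_k c_k t_k=0$ if and only if $\sum_k c_k s_k=0$. So a subset of $\{s_k\}$ is linearly independent exactly when the corresponding subset of $\{t_k\}$ is. In particular, $\{t_k\}$ is linearly dependent whenever $\{s_k\}$ is, and the $t_k$ remain pairwise distinct, so $\{(p_k,t_k)\}$ is a legitimate task satisfying the hypotheses.
\item The average states correspond: $\bar t=\sum_k p_k t_k=M(\bar s)$.
\item Decompositions correspond. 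For a probability vector $q$ supported on an index set $I$, linearity and injectivity give $\sum_k q_k t_k=\bar t$ if and only if $\sum_k q_k s_k=\bar s$. This follows by applying the argument of step (i) to the coefficients $c_k=q_k-p_k$. Combined with step (i), $I$ labels a linearly independent subset for $\{s_k\}$ exactly when it does for $\{t_k\}$, and the unique weights attached to it are the same vector $q$.
\end{enumerate}

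This establishes the bijection.

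For Lemma~\ref{lemma:ConvexlyDependent} and the results stated only in the appendix, the same reasoning applies, provided their bounds are likewise functions of the prior and of the affine/linear relations among the states. The relevant relations are convex dependence, meaning $s_j\in\mathsf{ConvHull}$ of the others, which is again a statement about the vanishing of linear combinations with prescribed coefficients. I would therefore state and prove one meta-lemma: for every finite family of real coefficients $(c_k)$, the relation $\sum_k c_k s_k=0$ holds if and only if $\sum_k c_k t_k=0$. I would then check, bound by bound, that each one is expressed solely through such relations and through $\{p_k\}$.

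The main obstacle is this bookkeeping step for the appendix results. I would need to verify that none of them secretly uses data beyond linear relations, such as the specific effects of the original GPT, norms, or the geometry of $\bar\Omega$ itself. The first thing I would try is to rewrite each appendix bound explicitly as an optimisation over coefficient vectors satisfying linear constraints in the $s_k$, so that the meta-lemma applies verbatim.

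A second point to check is subtle: the bounds are computed, not achieved. The actual $P_{\rm guess}$ may differ between the two GPTs because their effect spaces differ. The corollary asserts only that the numerical bounds coincide, so no claim about $P_{\rm guess}$ itself needs to be made. Each bound remains valid in the target GPT because Theorem~\ref{BoundsLinDepSD} applies there directly, with the same numbers.
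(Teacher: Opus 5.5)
Your argument for Theorem~\ref{BoundsLinDepSD} and Corollary~\ref{corollary:MaxAndHalfBounds} is essentially the paper's own. Its auxiliary lemma in Appendix~\ref{app:affine-invariance} records exactly your steps: linearity transports decompositions forward and cannot create independence, and injectivity supplies the converses. Your meta-lemma also yields Corollary~\ref{corollary: UniformPrior}, since $D=N-\dim\{c:\sum_k c_k s_k=0\}$. It likewise yields invariance of the extremal set, $p_{\min}$, $p_{\rm interior}$ and the induced prior, which is all the paper's closing remark checks for Lemma~\ref{lemma:ConvexlyDependent}.

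The step that does not go through is the one you flagged as the main obstacle, and your proposed remedy cannot fix it. The bound of Lemma~\ref{lemma:ConvexlyDependent} contains $P^{\bar S}_{\mathrm{error}}$, the \emph{optimal} error over the effects of the given GPT. That quantity is not a function of the linear relations among the states. Take the qubit states $s_1=\ketbra{0}{0}$, $s_2=\ketbra{+}{+}$, $s_3=\tfrac12(s_1+s_2)$ with uniform prior, and the injective linear map sending $s_1,s_2$ to the two vertices of a classical bit. Both tasks have $p_{\min}=p_{\rm interior}=\tfrac13$ and induced prior $(\tfrac12,\tfrac12)$ on the extremal pair, so in both cases the bound reads $\tfrac12\bigl(\tfrac13+P^{\bar S}_{\mathrm{error}}\bigr)$. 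However, $P^{\bar S}_{\mathrm{error}}=\tfrac12\bigl(1-\tfrac{1}{\sqrt2}\bigr)$ for the qubit and $0$ for the bit.

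Hence no rewriting of this term as an optimisation over coefficient vectors exists, and the invariance claim for Lemma~\ref{lemma:ConvexlyDependent} is false if read literally. The paper's remark silently passes over this term. What you can prove is invariance of the fully geometric version used in the cube example. The bound is increasing in $P^{\bar S}_{\mathrm{error}}$, so you may replace that term by a lower bound: either $\mathsf P_{\{q'_k\}}$ from Theorem~\ref{BoundsLinDepSD} applied to $\bar S$ with the induced prior, or $0$ if $\bar S$ is linearly independent. That replacement depends only on priors and linearly independent decompositions, and your bijection argument then finishes the proof.
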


Intuitively, this invariance holds because an injective linear map preserves linear (in)dependence and the coefficients of any decomposition of the average state; since the bounds depend only on the prior $\{p_k\}$ and on the weights $\{q_k\}$ of linearly independent decompositions, they take the same value before and after the map (see Appendix~\ref{app:affine-invariance} for the proof).

Therefore, given the prior $\{p_k\}$ of an SD task, the bounds calculated for a particular set of states $\{s_k\}$   actually apply to a whole  class of sets, ie those  linearly equivalent. {This invariance is worth emphasizing: a given physical state set can be written in many mathematically distinct but operationally equivalent ways (different choices of basis, normalization, or embedding into a larger or different GPT). Corollary~\ref{corollary:invarianceBoundsINjectiveLinearM} guarantees that our bounds do not depend on any such choice---they are gauge invariant, or representation independent. They thus constrain the genuine operational content of the state set rather than an artifact of its description, and apply simultaneously to every realization of that set, even across different theories or systems.} Figure~\ref{fig:cube} depicts this fact for a particular example detailed in Appendix~\ref{app:affine-invariance}.  In Appendix~\ref{app:UniformAndConvexlyDep} we also derive a strengthened version of the bounds for the case in which some states in the set are mixtures of other states in the set, i.e., non-extremal in $\mathrm{ConvHull}[\{s_k\}]$ (Lemma~\ref{lemma:ConvexlyDependent}).

\begin{figure}[h!]
\centering

\begin{subfigure}{\columnwidth}
\caption{}
\begin{minipage}{0.48\columnwidth}
\centering
\begin{tikzpicture}[scale=1,baseline={(0,0)}]

\coordinate (A) at (0,0);
\coordinate (B) at (2,0);
\coordinate (C) at (2,2);
\coordinate (D) at (0,2);
\coordinate (A') at (-0.6,0.6);
\coordinate (B') at (1.4,0.6);
\coordinate (C') at (1.4,2.6);
\coordinate (D') at (-0.6,2.6);
\draw (A) -- (B) -- (C) -- (D) -- cycle;
\draw (A') -- (B') -- (C') -- (D') -- cycle;
\draw (A) -- (A');
\draw (B) -- (B');
\draw (C) -- (C');
\draw (D) -- (D');
\fill[color = red] (A) circle (0.04) node[below left, color = red] {$s_1$};
\fill[color = red] (B) circle (0.04) node[below right, color = red] {$s_2$};
\fill[color = red] (B') circle (0.04) node[right, color = red] {$s_3$};
\fill[color = red] (A') circle (0.04) node[below left, color = red] {$s_4$};
\fill[color = red] (D) circle (0.04) node[above, color = red] {$s_5$};
\fill[color = red] (C) circle (0.04) node[above right, color = red] {$s_6$};
\fill[color = red] (C') circle (0.04) node[above right, color = red] {$s_7$};
\fill[color = red] (D') circle (0.04) node[above left, color = red] {$s_8$};
\coordinate (Sbar) at (0.7,1.3);
\fill[purple] (Sbar) circle (0.05);
\node[purple, right] at (Sbar) {$\bar{s}$};

\draw[->] (2.3,1.4) to[out=30, in=210] (3.3,1.4);
\draw[->] (3.3,1.2) to[out=210, in=30] (2.3,1.2);

\end{tikzpicture}
\end{minipage}
\begin{minipage}{0.48\columnwidth}
\centering
\begin{tikzpicture}[scale=0.6, baseline={(-0.5,-0.5)}]

\coordinate (A) at (0,0);
\coordinate (B) at (2,0);
\coordinate (C) at (2,2);
\coordinate (D) at (0,2);
\coordinate (A') at (-0.6,0.6);
\coordinate (B') at (1.4,0.6);
\coordinate (C') at (1.4,2.6);
\coordinate (D') at (-0.6,2.6);
\draw (A) -- (B) -- (C) -- (D) -- cycle;
\draw (A') -- (B') -- (C') -- (D') -- cycle;
\draw (A) -- (A');
\draw (B) -- (B');
\draw (C) -- (C');
\draw (D) -- (D');

\fill[color = red] (A) circle (0.04) node[below left, color = red] {$t_1$};
\fill[color = red] (B) circle (0.04) node[below right, color = red] {$t_2$};
\fill[color = red] (B') circle (0.04) node[scale = 0.8,right, color = red] {$t_3$};
\fill[color = red] (A') circle (0.04) node[below left, color = red] {$t_4$};
\fill[color = red,] (D) circle (0.04) node[scale = 0.8, left, color = red] {$t_5$};
\fill[color = red] (C) circle (0.04) node[above right, color = red] {$t_6$};
\fill[color = red] (C') circle (0.04) node[above right, color = red] {$t_7$};
\fill[color = red] (D') circle (0.04) node[above left, color = red] {$t_8$};
\fill[color = red] (A) circle (0.08);
\fill[color = red] (B) circle (0.08);
\fill[color = red] (B') circle (0.08);
\fill[color = red] (A') circle (0.08);
\fill[color = red] (D) circle (0.08);
\fill[color = red] (C) circle (0.08);
\fill[color = red] (C') circle (0.08);
\fill[color = red] (D') circle (0.08);
\coordinate (Sbar) at (0.7,1.3);
\fill[purple] (Sbar) circle (0.08);
\node[purple, right] at (Sbar) {$\bar{t}$};

\draw[dashed] (Sbar) circle (1.83);
\draw[dashed] (Sbar) ellipse (1.83 and 0.4);

\end{tikzpicture}
\end{minipage}
\end{subfigure}

\vspace{1em}

\begin{subfigure}{\columnwidth}
\caption{}
\begin{minipage}{0.48\columnwidth}
\centering
\begin{tikzpicture}[scale=1,baseline={(0,0)}]

\coordinate (A) at (0,0);
\coordinate (B) at (2,0);
\coordinate (C) at (2,2);
\coordinate (D) at (0,2);
\coordinate (A') at (-0.6,0.6);
\coordinate (B') at (1.4,0.6);
\coordinate (C') at (1.4,2.6);
\coordinate (D') at (-0.6,2.6);
\draw (A) -- (B) -- (C) -- (D) -- cycle;
\draw (A') -- (B') -- (C') -- (D') -- cycle;
\draw (A) -- (A');
\draw (B) -- (B');
\draw (C) -- (C');
\draw (D) -- (D');
\fill[color = red] (A) circle (0.04) node[below left, color = red] {$s_1$};
\fill[color = red] (B) circle (0.04) node[below right, color = red] {$s_2$};
\fill[color = red] (B') circle (0.04) node[right, color = red] {$s_3$};
\fill[color = red] (A') circle (0.04) node[below left, color = red] {$s_4$};
\fill[color = red] (D) circle (0.04) node[above, color = red] {$s_5$};
\fill[color = red] (C) circle (0.04) node[above right, color = red] {$s_6$};
\fill[color = red] (C') circle (0.04) node[above right, color = red] {$s_7$};
\fill[color = red] (D') circle (0.04) node[above left, color = red] {$s_8$};
\coordinate (Sbar) at (0.7,1.3);
\fill[purple] (Sbar) circle (0.05);
\node[purple, right] at (Sbar) {$\bar{s}$};

\draw[->] (2.3,1.4) to[out=30, in=210] (3.3,1.4);
\draw[->] (3.3,1.2) to[out=210, in=30] (2.3,1.2);

\end{tikzpicture}
\end{minipage}
\begin{minipage}{0.48\columnwidth}
\centering
\begin{tikzpicture}[xscale=1,yscale=0.5, baseline={(-1,-0.75)}]


\begin{scope}[rotate around={20:(0.7,1.3)}]

\coordinate (A) at (0,0);
\coordinate (B) at (2,0);
\coordinate (C) at (2,2);
\coordinate (D) at (0,2);

\coordinate (A') at (-0.6,0.6);
\coordinate (B') at (1.4,0.6);
\coordinate (C') at (1.4,2.6);
\coordinate (D') at (-0.6,2.6);

\draw (A) -- (B) -- (C) -- (D) -- cycle;
\draw (A') -- (B') -- (C') -- (D') -- cycle;

\draw (A) -- (A');
\draw (B) -- (B');
\draw (C) -- (C');
\draw (D) -- (D');

\fill[color = red] (A) circle (0.06) node[below left, color = red] {$t_1$};
\fill[color = red] (B) circle (0.06) node[below right, color = red] {$t_2$};
\fill[color = red] (B') circle (0.06) node[above right, color = red] {$t_3$};
\fill[color = red] (A') circle (0.06) node[below left, color = red] {$t_4$};
\fill[color = red] (D) circle (0.06) node[scale = 0.9,above right, color = red] {$t_5$};
\fill[color = red] (C) circle (0.06) node[above right, color = red] {$t_6$};
\fill[color = red] (C') circle (0.06) node[above right, color = red] {$t_7$};
\fill[color = red] (D') circle (0.06) node[scale = 1,above, color = red] {$t_8$};

\foreach \p in {A,B,C,D,A',B',C',D'} {
  \fill[red] (\p) circle (0.04);
}

\end{scope}

\coordinate (Sbar) at (0.7,1.3);
\fill[purple] (Sbar) circle (0.05);
\node[purple, right] at (Sbar) {$\bar{t}$};

\end{tikzpicture}
\end{minipage}
\end{subfigure}

\caption{\justifying Illustration of Corollary~\ref{corollary:invarianceBoundsINjectiveLinearM}. An injective linear map sends the cube configuration $\{s_k\}$ with average $\bar{s}$ to a new set $\{t_k\}$ with average $\bar{t}$. (a) Uniform scaling of the cube that fits it into the state space of a qubit. (b) Diagonal stretching. For uniform priors the same SD bound holds in both cases, $P_{\mathrm{guess}}\le 25/28$. See Appendix~\ref{app:affine-invariance}. }
\label{fig:cube}
\end{figure}

Let us now analyze the bounds. The usefulness of a bound on $P_{\rm guess}$ can be assessed in several ways; for instance, by how closely it approximates the optimal value, how easy it is to compute, or how broadly it applies. The strength of our bounds lies not in their tightness, but in \emph{their universality and geometric character}. Because they depend only on the convex geometry of the state set and the prior distribution, they provide theory-independent limits that can be evaluated directly from simple geometric features of $\{s_k\}$.  In particular, the bounds remain valid in non-quantum GPTs where standard quantum state-discrimination techniques, such as Barnum--Knill or pretty-good-measurement estimates~\cite{HausladenWootters1994,BarnumKnill2002}, may not be available. In many situations the bounds yield non-trivial analytical constraints on discrimination tasks without solving the optimization problem defining $P_{\rm guess}$, which can be of practical advantage.

Because of this geometric and universal character, the bounds can serve as feasibility tests when choosing state sets for discrimination tasks, again without having to calculate $P_{\rm guess}$ itself. Suppose a protocol requires a success probability $P^*$ for a given number of labels and prior distribution. If for a candidate state set $\{s_k\}$ one finds a linearly independent decomposition $\{q_k\}$ such that $P^*>1-\mathsf{P}_{\{q_k\}}$, then no measurement strategy can achieve the desired performance with that set (nor with any linearly equivalent one, by Corollary~\ref{corollary:invarianceBoundsINjectiveLinearM}). Thus, Theorem~\ref{BoundsLinDepSD} allows one to discard the candidate set $\{s_k\}$ and any linearly equivalent one.

If the prior is uniform, we show in the Appendix (see Corollary~\ref{corollary: UniformPrior}) that having $P^*>1-\frac{1}{N}\!\left(\frac{N-D}{N-D+1}\right)$ for some integer $D$ implies that no set of states of any GPT system satisfying $\dim(\mathrm{Span}[\{s_k\}])=D$ can achieve the task. Finally, using only the prior distribution, whatever it is, Corollary~\ref{corollary:MaxAndHalfBounds} tells us that if $P^*>1-\frac{p_{\rm min}}{2}$, no set of linearly dependent states of any GPT system can achieve $P_{\rm guess}=P^*$.

As an illustration, if one is interested in a state set with discriminability $P^*=0.9$, Example~\ref{example:cube} in the Appendix tells us that eight states forming the vertices of a cube are not suitable (or any linearly equivalent set). If one requires $P^*=0.91$ for an SD task with uniform priors, then no state set with $8$ states and spanning a dimension $D=4$ can achieve the task, regardless of its geometry or of the GPT system in which it is realized. Finally, if $P^*>0.9375$, then only state sets $\{s_k\}_{k=1}^8$ in which all eight states are linearly independent can achieve such discriminability for uniform priors. Note that, even though these latter bounds are looser, they lead to stronger conclusions.

{Above, the bounds hinged on a single hypothesis: that the state set is linearly dependent. Linear dependence is a meaningful geometric feature and is numerically testable but, on its own, it carries no obvious physical content. As we show next, however, there is an important class of tasks that require linear dependence, namely the class of \emph{contextuality-powered tasks}, endowing our bounds with relevant physical implications and providing a new link between nonclassicality and discriminability.}

\section{Advantage on contextuality-powered tasks implies bounded performance on SD tasks}
\label{sec:Contextuality}

Generalized noncontextuality provides a gold standard when it comes to defining whether a theory, system or experiment admits of a classical explanation~\cite{Spekkens_Contextuality2005}\footnote{The main idea behind generalized noncontextuality resorts to the application of a Leibnizian principle, carrying a robust and deep conceptual foundation~\cite{spekkens2019_Liebniz}, which is outside the scope of this work}.
{Besides its foundational significance, contextuality has been identified as a resource behind advantages in a range of information-processing tasks, with concrete implications for quantum technologies~\cite{Howard2014,Raussendorf2013,Bermejo-Vega2017,Schmid_CAdvantageousSD}. We briefly recall the idea, specialized to the prepare-and-measure setting of this work; for the general formulation see Refs.~\cite{Spekkens_Contextuality2005,Schmid_2021_NCinGPTsystems}. Recall that experiments can be phrased in terms of the procedures and equivalence classes introduced in Sec.~\ref{sec:Preliminaries}. Faced with the operational statistics $p(k\,|\,M,P)$ of a prepare-and-measure scenario, one may try to explain them with an \emph{ontological model}. Such a model posits a set $\Lambda$ of \emph{ontic states}, each $\lambda\in\Lambda$ being a complete specification of the physical properties of the system -- the ``real state of affairs'' underlying a run of the experiment. A preparation procedure $P$ need not pin the ontic state down, only set it up with some probability, so it is represented by a probability distribution $\mu_P(\lambda)$ over $\Lambda$; likewise, the ontic state need not determine measurement outcomes with certainty, so each measurement event $(k|M)$ is represented by a response function $\xi_{k|M}(\lambda)\in[0,1]$, the probability of obtaining outcome $k$ in measurement $M$ when the ontic state is $\lambda$. The model explains the experiment when it recovers the observed statistics: $p(k|M,P)=\sum_{\lambda\in\Lambda}\mu_P(\lambda)\,\xi_{k|M}(\lambda)$. Stated this way, the scheme is no constraint at all as any experiment with any statistics admit \emph{some} ontological model. The substantive question is whether the statistics admit a model that draws no distinctions the experiment itself does not: one in which the representation of each procedure depends only on its equivalence class. A model is \emph{noncontextual} when this is the case, i.e., $P\simeq P'$ implies $\mu_P=\mu_{P'}$ and $(k|M)\simeq(k'|M')$ implies $\xi_{k|M}=\xi_{k'|M'}$. The distributions and response functions of a noncontextual model are then well-defined functions of the GPT states $s\leftrightarrow[P]$ and effects $e\leftrightarrow[k|M]$ alone, and the model takes the form
\begin{align}
\label{eq:OntModel}
B(e,s)=\sum_{\lambda\in\Lambda}\mu_s(\lambda)\,\xi_e(\lambda).
\end{align}
A system is called \emph{noncontextual or classical} when a single such model covers every experiment one can design with it; conversely, the system is called \emph{contextual} when some of its statistics admit no noncontextual model, i.e., when any ontological explanation is forced to distinguish between operationally equivalent procedures. Contextuality in this sense is a hallmark of nonclassicality and the resource behind the advantages mentioned above.}
Generalized noncontextuality subsumes many notions of classicality in the literature, such as the existence of a positive quasi-probabilistic  representation~\cite{Spekkens_NegativityAndC2008,Schmid_structureTheorem_2024,Schmid_KD_2024}, classicality emergent within quantum Darwinism processes~\cite{Baldijao_2022} and the classicality in GPTs, traditionally linked to strictly simplicial theories~\cite{barrett_GPTsInformationProcessing,Schmid_2021_NCinGPTsystems}. Furthermore, nonclassicality signaled by  generalized contextuality has larger applicability than Bell-nonlocality~\cite{Bell1964} and Kochen-Specker contextuality~\cite{KochenSpecker1967} and allows for many advantageous applications, from thermodynamics~\cite{Naim_CinAnomaloushEATfLOW,Lostaglio_CandTheromLinearResponse2020} to  cryptography~\cite{Spekkens_POM} and computation~\cite{Howard2014,Raussendorf2013}.

The characterization of noncontextuality for GPT single-systems in prepare-and-measure scenarios  was described in Ref.~\cite{Schmid_2021_NCinGPTsystems}, which showed that a GPT system $(\bar\Omega,\mathcal{E},B(\cdot,\cdot))$ is classical if and only if it can be embedded into a strictly classical GPT. Our focus, however, is in a particular consequence of such characterization that tells when a single system's set of states always admit of a classical (ie, noncontextual) explanation.

{Intuitively, if the states $\{s_k\}$ are linearly independent, then any preparation built from them, $s=\sum_k q_k\,s_k$, is uniquely determined by its weights $\{q_k\}$. One may then take the labels $k$ themselves as ontic states, represent each preparation by the distribution $\{q_k\}$ over them, and let each effect $e$ respond to ontic state $k$ with probability $\xi_e(k)=B(e,s_k)$; bilinearity of $B$ then reproduces the observed statistics, as in Eq.~\eqref{eq:OntModel}. Because the weights $\{q_k\}$ are uniquely fixed by the prepared state, the assignment depends only on the equivalence class of each preparation procedure, so the model is noncontextual. This is exactly the mechanism identified, for quantum systems, in the study of emergent classicality under quantum Darwinism~\cite{Baldijao_NCunderQD2021}, where sufficiently distinguishable environmental encodings are affinely (hence linearly) independent and therefore admit a noncontextual model. The following lemma states the GPT generalization.}

\begin{restatable}{lemma}{LinearIndepSufficientForNC}
\label{lemma:CandLinearDep}
    If a set of states $\{s_k\}\subset\bar{\Omega}$ of a GPT system is linearly independent, then it cannot allow for contextuality in prepare-and-measure experiments.
\end{restatable}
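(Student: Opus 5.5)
The plan is to build an explicit noncontextual ontological model, following the intuition sketched just before the statement. A prepare-and-measure experiment using the states $\{s_k\}_{k=1}^N$ can prepare any state in $\mathsf{ConvHull}[\{s_k\}]$: each preparation procedure $P$ is some mixture of procedures realizing the $s_k$, and its equivalence class is the GPT state $s=[P]=\sum_k q_k s_k$. The measurement events are arbitrary effects $e\in\mathcal{E}$.

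I take the ontic state space to be the finite label set, $\Lambda=\{1,\dots,N\}$. I would carry out the following steps in order.

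\textbf{Step 1: unique weights.} Since $\{s_k\}$ is linearly independent, it is a basis of $\mathsf{Span}[\{s_k\}]$. Every $s\in\mathsf{ConvHull}[\{s_k\}]$ therefore has a \emph{unique} coefficient vector $\{q_k(s)\}$ with $s=\sum_k q_k(s)s_k$. Because $s$ lies in the convex hull, these unique coefficients are nonnegative and sum to one. I define $\mu_s(k):=q_k(s)$, which is then a genuine probability distribution on $\Lambda$.

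\textbf{Step 2: response functions.} For each effect $e$, set $\xi_e(k):=B(e,s_k)$. This lies in $[0,1]$ because $s_k\in\bar\Omega$ and $e\in\mathcal{E}$. For any measurement $\{e_i\}$ with $\sum_i e_i=u$, bilinearity and $B(u,s_k)=1$ give $\sum_i\xi_{e_i}(k)=1$. So the $\xi$ are valid response functions, and they depend on $e$ only. The latter holds because $B$ is defined on equivalence classes, i.e.\ on effect vectors.

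\textbf{Step 3: reproduce the statistics.} By linearity of $B$ in its second slot,
\[
B(e,s)=\sum_k q_k(s)B(e,s_k)=\sum_{\lambda\in\Lambda}\mu_s(\lambda)\xi_e(\lambda),
\]
which is exactly Eq.~\eqref{eq:OntModel}.

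\textbf{Step 4: noncontextuality.} Suppose $P\simeq P'$. By separation (nondegeneracy of $B$), the two procedures correspond to the same vector $s$. Step 1 then gives the same $\mu_s$, so $\mu_P=\mu_{P'}$. Likewise, operationally equivalent measurement events give the same effect vector and hence the same $\xi_e$. The model is therefore noncontextual and covers every prepare-and-measure experiment built from these states, which proves the lemma.

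The main obstacle is Step 1, in particular the well-definedness of $\mu$ on equivalence classes. One might worry that two operationally equivalent procedures arise as different mixtures of the $s_k$, say with weights $\{q_k\}$ and $\{q'_k\}$. This is precisely what linear independence excludes: from $\sum_k(q_k-q'_k)s_k=0$ it follows that $q=q'$. In the linearly dependent case this is exactly where the argument breaks, which matches Remark~\ref{rmk:AffIndepDec}. A secondary subtlety is the scope of ``experiments'': I would state explicitly that the preparations are restricted to convex mixtures of the $s_k$, while the effects are allowed to range over all of $\mathcal{E}$.
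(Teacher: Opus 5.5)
Your proof is correct. It reaches the result more directly than the paper does.

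The paper never writes down an ontological model. Instead it:
\begin{enumerate}
\item invokes the characterization of noncontextuality as simplex-embeddability;
\item completes $\{s_k\}$ to a linearly independent set $\{s'_k\}_{k=1}^D\subset\Omega$ spanning $\mathsf{Span}[\Omega]$, and notes that $\mathsf{ConvHull}[\{s'_k\}]$ is a simplex;
\item enlarges the effects to the set $\mathcal{E}^{\max}$ of all functionals that are $[0,1]$-valued on that simplex, and embeds the result into a simplicial GPT system;
\item concludes by transitivity of embeddings.
\end{enumerate}
Unwinding that chain, $\iota(s)$ is the vector of barycentric coordinates of $s$ and $\kappa(e)=(B(e,s'_k))_k$. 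This is exactly your $\mu_s(k)=q_k(s)$ and $\xi_e(k)=B(e,s_k)$, on the enlarged label set. So both proofs rest on the same classical model and differ only in which formulation of noncontextuality they verify.

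Your route checks the ontological-model definition directly and is more self-contained. It needs neither the embedding theorem, nor the basis completion, nor $\mathcal{E}^{\max}$ and transitivity. (In the paper, the completion is what ensures the fragment's states still separate effects, so that $\kappa$ can be injective.) The paper's route instead ties the lemma to the standard GPT criterion for classicality of fragments.

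One point in your Step~4 should be tightened. If equivalence of measurement events is judged only against the preparations in $\mathsf{ConvHull}[\{s_k\}]$, two equivalent events need not share an effect vector, so your stated justification does not cover that case. The model is still noncontextual, because $\xi_e(k)=B(e,s_k)$ depends only on how $e$ acts on the $s_k$. That is the reason you should give.
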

This result was proven for the special case of quantum systems in Ref.~\cite{zhang2026reassessingboundary}  and we generalize it to GPTs in the Appendix~\ref{app:ProofLemmaCandLD}.

It is thus clear that the conjunction of Theorem~\ref{BoundsLinDepSD} and Lemma~\ref{lemma:CandLinearDep} imply a link between the performance of a set of states in SD tasks and tasks in which the advantage comes from generalized contextuality. We formalize this link via the following Theorem, which constitutes our second main result.

\begin{theorem}[Advantage in contextuality powered tasks implies bounds on the success in any SD tasks]
\label{theorem:linkCandSDtasks}
Consider a set of states $\{s_k\}_k\subset \Omega$ of a GPT system. If there exists  \textbf{some} contextually-powered tasks for which $\{s_k\}$ provides advantage, then for \textbf{any} SD task that uses the full set of states ($\{(p_k,s_k)\}$ with $p_k>0$ for all $k$),  $P_{\rm guess}$ obeys the bounds of Theorem~\ref{BoundsLinDepSD}.
Contra-positively,  if there exists \textbf{some} SD task (using the full set $\{s_k\}$) in which $P_{\rm guess}$ violates some bound in Theorem~\ref{BoundsLinDepSD} (or its corollaries), then the set cannot yield an advantage in \textbf{any} contextually-powered tasks.
\end{theorem}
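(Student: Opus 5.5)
The proof will be a direct chaining of Lemma~\ref{lemma:CandLinearDep} with Theorem~\ref{BoundsLinDepSD}, routed through the contrapositive of the lemma. First I will make precise what ``$\{s_k\}$ provides advantage in some contextuality-powered task'' means. By definition of such tasks, an advantage certifies that the prepare-and-measure statistics obtainable with preparations drawn from $\{s_k\}$, together with their convex mixtures and any effects of the GPT system, admit no noncontextual ontological model of the form~\eqref{eq:OntModel}. In other words, the set allows for contextuality in prepare-and-measure experiments. Stating this definitional step explicitly is the main (mild) obstacle. Operational advantage must be tied to the nonexistence of a noncontextual model for the experiments built from $\{s_k\}$, and not for the whole system. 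I will adopt the standard convention that a noncontextual model reproducing all statistics of the task caps its success at the classical value, so an advantage rules such a model out.

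With that in place, the contrapositive of Lemma~\ref{lemma:CandLinearDep} gives the key step. If $\{s_k\}$ allows contextuality, then $\{s_k\}$ is not linearly independent, i.e.\ it is a finite linearly dependent subset of $\bar\Omega$. (Here I use that the sets considered are finite and consist of distinct normalized states, as in the SD setting.)

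Now fix any SD task $\{(p_k,s_k)\}$ with $p_k>0$ for all $k$. The hypotheses of Theorem~\ref{BoundsLinDepSD} hold: the set is finite, linearly dependent, and every prior is strictly positive. Hence, for every linearly independent decomposition $\{q_k\}$ of $\bar s$ (these exist by Remark~\ref{rmk:AffIndepDec}), we get $P_{\rm guess}\le 1-\mathsf P_{\{q_k\}}$. The bounds of Corollary~\ref{corollary:MaxAndHalfBounds} follow in the same way. So do those of Lemma~\ref{lemma:ConvexlyDependent}, whenever its extra hypothesis (some state non-extremal in the convex hull) holds, since all of these are derived under linear dependence alone. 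This proves the first claim.

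For the second claim I will simply take the logical contrapositive. If some SD task using the full set has $P_{\rm guess}$ exceeding one of these bounds, then $\{s_k\}$ cannot be linearly dependent, because otherwise Theorem~\ref{BoundsLinDepSD} or its corollaries would be violated. Therefore $\{s_k\}$ is linearly independent, and Lemma~\ref{lemma:CandLinearDep} yields a noncontextual model for every prepare-and-measure experiment using it. Under the definitional step above, this precludes any contextual advantage.

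Finally, I will remark that by Corollary~\ref{corollary:invarianceBoundsINjectiveLinearM} the conclusion transfers to any linearly equivalent set. This is consistent because linear (in)dependence, and hence the hypothesis of Lemma~\ref{lemma:CandLinearDep}, is preserved by injective linear maps.
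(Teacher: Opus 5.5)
Your proposal is correct and follows the paper's proof: the paper likewise uses the contrapositive of Lemma~\ref{lemma:CandLinearDep} to conclude that a set providing contextual advantage is linearly dependent, applies Theorem~\ref{BoundsLinDepSD} to every SD task with $p_k>0$, and obtains the second claim as the logical contrapositive. You also spell out that ``providing a contextual advantage'' means ``admitting no noncontextual model for the prepare-and-measure experiments built from $\{s_k\}$''; the paper leaves this identification implicit, so stating it clarifies the argument without changing its route.
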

\begin{proof}
   {The proof follows directly from the two results just assembled. Indeed, by (the contrapositive of) Lemma~\ref{lemma:CandLinearDep}, a set $\{s_k\}$ that provides advantage in some contextuality-powered task cannot be linearly independent, hence it is linearly dependent. Theorem~\ref{BoundsLinDepSD} then applies and bounds $P_{\rm guess}$ in every SD task using the full set, which is the first statement; the second is its contrapositive.}
\end{proof}

Theorem~\ref{theorem:linkCandSDtasks} provides
a new connection between contextuality and discriminability in GPT.
To illustrate how this link operates in concrete scenarios, we consider two cases. First, if the set of states $\{s_k\}$ is well-suited for encoding the classical information represented by the labels $\{k\}_k$, in the sense that $P_{\rm guess}>1-\frac{p_{\rm min}}{2}$ for any prior $p_k$, then $\{s_k\}$ cannot provide contextual advantage. Such cases  naturally arise, for instance, in the scope of quantum Darwinism, see e.g. \cite{Zurek_QD2009,BPH2015, HKH_QuantumOriginsObjectivity2015,Korbicz_RoadsToObjectivity2021},  a process in which the interaction between a central system and many  environmental subsystems leads to the emergence of classicality. When such a process occurs, some classical information about the system is very well encoded in  portions of the environment. Ref~\cite{Baldijao_NCunderQD2021} showed that in such a case, noncontextuality emerges. Theorem~\ref{theorem:linkCandSDtasks} provides a related but different perspective: the set of `environmental' states $\{s_k\}$ that encode the classical information after a quantum Darwinism process has occurred cannot provide advantage in any contextuality driven tasks.

As a second example of this link between contextuality and discriminability, consider the case of generalized parity oblivious multiplexing tasks (gOM taks), defined below.
\begin{example}[d-ary oblivious multiplexing task]
\label{example:gOM}
In the \(d\)-ary generalization of parity-oblivious multiplexing,
Alice receives a random string $x=(x_1,\ldots,x_n)\in\{0,\ldots,d-1\}^n$ and encodes it
into a state $s_x$. The encoding is required to be oblivious to certain linear
functions of the input. For instance, one may require that for every nonzero
$a\in\mathbb{Z}_d^n$, the value of $a\cdot x \bmod d$ is hidden, meaning that the
average encoded state is the same for all attainable values of this quantity:
\begin{align}
\label{eq:obliviousness}
\sum_{x:\,a\cdot x=r} s_x
=
\sum_{x:\,a\cdot x=r'} s_x
\qquad \forall r,r'\in\mathbb{Z}_d.
\end{align}

More general oblivious multiplexing tasks impose related, possibly weaker,
obliviousness constraints.

Bob receives the system and an index \(y\in\{1,\ldots,n\}\), chosen uniformly and independently of \(x\).
His goal is to output \(c_y\in\{0,\ldots,d-1\}\) as a guess for the \(y\)-th symbol \(x_y\) by measuring the received system.
The average success probability is
\begin{align}
P^{\mathrm{OC}}_{\mathrm{succ}}
= \frac{1}{nd^n}\sum_{y=1}^n \sum_{x} p(c_y=x_y\mid x,y).
\end{align}
This task reduces to the usual binary parity-oblivious multiplexing{~\cite{Spekkens_POM,Chailloux_2016}} when \(d=2\) and the hidden combinations correspond to mod-2 parities of the input bits and is a special case of oblivious communication tasks studied in Ref.~\cite{SahaAnubhav_CinCommunication2019}.
\end{example}

Note that the probability of success in gOM, $P^{OC}_{\rm succ}$, is an average of state discrimination tasks, one for each $y$. In other words,
For each query $y$, Bob’s subtask is a state-discrimination problem with $d$ states and random prior,
\begin{align}
\Big\{\tfrac{1}{d},\, s_{x_y}\Big\}_{x_y=0}^{d-1},
\qquad
s_{x_y=v}
   = \frac{1}{d^{\,n-1}}\!\!\sum_{x:\,x_y=v}\! s_x ,
\end{align}
having an optimum probability of guessing $P^{(y)}_{\rm guess}:= \max_{\{F_v\}}B(F_v,s_{x_y=v})$.
The total payoff \(P^{\mathrm{OC}}_{\mathrm{succ}}\) is the average of the \(n\) guessing probabilities \(P_{\mathrm{guess},y}\) associated with these ensembles.
\begin{align}
   P^{\mathrm{OC}}_{\mathrm{succ}} = \frac{1}{n}\sum_{y=1}^n P^{(y)}_{\mathrm{guess}}.
\end{align}
{Any set of states $\{s_x\}$ providing nonclassical advantage in this task obeys the bounds of Theorem~\ref{BoundsLinDepSD} in every SD task that uses all states in the set. This can be seen in two ways. Directly: any encoding satisfying the obliviousness constraints, advantageous or not, is linearly dependent (each constraint in Eq.~\eqref{eq:obliviousness} is itself a nontrivial linear relation among the states), and advantageous sets are particular instances. Or through Theorem~\ref{theorem:linkCandSDtasks}: advantage in this task requires contextuality~\cite{SahaAnubhav_CinCommunication2019}, which forces linear dependence of advantageous sets in any task, with no task-specific check. In particular, there is no instance of an SD task for which $P_{\rm guess}>1-\frac{p_{\rm min}}{2}$ for such sets.}

{Theorem~\ref{theorem:linkCandSDtasks}, however, give us a more fine-grained analysis of oblivious communication tasks}. Indeed, since $P^{\rm OC}_{\rm succ}$ is an average of SD tasks, one for each $y$, if the overall probability of success $P^{\rm OC}_{\rm succ}$ is `too high', at least one of the `subtasks' requires {$P_{{\rm guess},y}$} to be above the bounds of Theorem~\ref{BoundsLinDepSD}{, and thus} the associated sub-ensemble $\{s_{x_y}\}_{x_y=0}^{d-1}$ cannot provide contextual advantage in any task.

\begin{restatable}[High $P_{\rm succ}^{OC}$ implies a classical sub-ensemble]{corollary}{CorGOM}
\label{Corollary:gOM}
Let $\{s_x\}$, with $x=(x_1,\ldots,x_n)$ and $x_i\in\{0,\ldots,d-1\}$, be the set of states used in a $d$-ary oblivious multiplexing task.
If this set provides nonclassical advantage in such cases, then for all SD tasks using all states in $\{s_x\}$, it holds that
\begin{align}
P_{\mathrm{guess}} < 1-\frac{1}{d^n}\!\left(\!\frac{d^n-D}{d^n-D+1}\!\right)<1-\frac{1}{2d^n},
\end{align}
 where $D:=\mathsf{span}[\{s_x\}_x]$.

Moreover, if the success probability in the oblivious multiplexing task satisfies
\begin{align}
P^{\mathrm{OC}}_{\mathrm{succ}} > 1-\frac{1}{2d},
\end{align}
then for at least one value of $y^*\in\{1,\ldots,n\}$, the sub-ensemble
\begin{align}
    \Big\{\tfrac{1}{d},\, s_{x_{y^*}}\Big\}_{x_{y^*}=0}^{d-1}
\end{align}
cannot provide advantage in any contextuality-powered task.
\end{restatable}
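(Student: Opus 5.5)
The plan is to prove the two parts of Corollary~\ref{Corollary:gOM} separately. Both reduce to results already stated: Theorem~\ref{theorem:linkCandSDtasks}, Corollary~\ref{corollary:MaxAndHalfBounds}, and the uniform-prior bound of Corollary~\ref{corollary: UniformPrior} in the Appendix.

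\emph{First part.} I read the SD tasks in the statement as the ones natural to the gOM setting: the uniform prior $p_x = d^{-n}$ over the $N := d^n$ strings. This is the reading under which the $d^{-n}$ prefactor makes sense; for a general prior I would instead quote the $1-\tfrac12 p_{\min}$ bound of Corollary~\ref{corollary:MaxAndHalfBounds}. The steps are as follows.
\begin{enumerate}
\item Show that $\{s_x\}$ is linearly dependent. This holds in two independent ways: directly, because each constraint in Eq.~\eqref{eq:obliviousness} is a nontrivial linear relation among the $s_x$; and via Theorem~\ref{theorem:linkCandSDtasks} combined with Lemma~\ref{lemma:CandLinearDep}, since an advantageous set is contextual.
\item Conclude that $D := \dim\mathsf{Span}[\{s_x\}] < N$.
\item Invoke Corollary~\ref{corollary: UniformPrior} with $N = d^n$. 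This gives
\[
P_{\rm guess} \le 1 - \tfrac{1}{N}\,\tfrac{N-D}{N-D+1}.
\]
\item For the second inequality, note that $t \mapsto t/(t+1)$ is increasing. Hence $N-D \ge 1$ gives $\tfrac{N-D}{N-D+1} \ge \tfrac12$, which yields the bound $1 - \tfrac{1}{2d^n}$.
\end{enumerate}

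The strictness of both inequalities is the delicate point.
\begin{itemize}
\item For the second inequality, I would show $N - D \ge 2$. Eq.~\eqref{eq:obliviousness} gives at least two independent linear relations when $d \ge 2$ and $n \ge 2$. For example, the constraints for $a = e_1$ and $a = e_2$ involve different partitions of the strings, so the corresponding relation vectors in the kernel of $x \mapsto s_x$ are not proportional.
\item For the first inequality, the strict form should come from the strict inequality $V < \infty$ inside $\mathsf P_{\{q\}}$, or from the strict form proved for Corollary~\ref{corollary: UniformPrior}. If only $\le$ is available, I would state that inequality non-strictly.
\end{itemize}

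\emph{Second part.} This is an averaging argument.
\begin{enumerate}
\item Since $P^{\rm OC}_{\rm succ} = \frac1n\sum_y P^{(y)}_{\rm guess} > 1 - \frac{1}{2d}$, some $y^*$ satisfies $P^{(y^*)}_{\rm guess} > 1 - \frac{1}{2d}$.
\item The sub-ensemble $\{(\tfrac1d, s_{x_{y^*}=v})\}_{v}$ has uniform prior, so $p_{\min} = 1/d$. Its guessing probability therefore exceeds $1 - \tfrac12 p_{\min}$.
\item There are two cases.
\begin{itemize}
\item If two of the $d$ marginal states coincided, at least two labels would be indistinguishable. Then $P^{(y^*)}_{\rm guess} \le 1 - \tfrac1d < 1 - \tfrac{1}{2d}$, a contradiction. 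So the $d$ states are distinct and form a set in the sense of the paper.
\item Since the states are distinct, the contrapositive of Corollary~\ref{corollary:MaxAndHalfBounds} applies: the set cannot be linearly dependent. By Lemma~\ref{lemma:CandLinearDep}, or equivalently the contrapositive clause of Theorem~\ref{theorem:linkCandSDtasks}, it admits no contextual advantage in any task.
\end{itemize}
\end{enumerate}

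I expect the main obstacle to be the strictness bookkeeping in the first part: confirming that the obliviousness constraints force $N - D \ge 2$. If that fails in a degenerate case such as $n = 1$, the second inequality must be weakened to $\le$.
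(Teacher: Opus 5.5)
Your structure matches the paper's proof. For the first claim you use linear dependence of the codebook (via Lemma~\ref{lemma:CandLinearDep} and Theorem~\ref{theorem:linkCandSDtasks}, or directly from Eq.~\eqref{eq:obliviousness}) together with Corollary~\ref{corollary: UniformPrior} at $N=d^n$; the paper's proof likewise only treats the uniform-prior task. For the second claim you use averaging plus Corollary~\ref{corollary:MaxAndHalfBounds} and Lemma~\ref{lemma:CandLinearDep}. Your second part is correct, and your check that the $d$ marginal states are distinct is a small improvement the paper omits. The real problem is the strictness bookkeeping. The paper's proof also leaves it unjustified (it simply writes $<$), and your proposed fixes do not work.

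\textbf{Second inequality.} Your claim $N-D\ge 2$ rests on the constraints for $a=e_1,e_2$. But the constraint for $a=e_y$ says precisely that all marginal states $s_{x_y=v}$ coincide, i.e.\ Bob's $y$-th subtask is pure guessing. Such constraints are absent from genuine parity-oblivious multiplexing, where only parities involving at least two symbols are hidden. Under the literal ``every nonzero $a$'' reading of Eq.~\eqref{eq:obliviousness}, every subtask is pure guessing, so no advantage exists and the claim is vacuous. In the case that matters, $N-D\ge 2$ already fails at $n=d=2$. The standard advantageous qubit encoding uses four states on a great circle of the Bloch ball with the single relation $s_{00}+s_{11}=s_{01}+s_{10}$. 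So $N=4$, $D=3$, and $1-\tfrac14\cdot\tfrac12=\tfrac78=1-\tfrac{1}{2\cdot 4}$. The second inequality is therefore only true with $\le$, and the failure is not confined to $n=1$.

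\textbf{First inequality.} Finiteness of $V$ says nothing about whether $P_{\rm guess}$ can reach $1-\mathsf P_{\{q_k\}}$. Strictness does hold, but it needs an equality analysis of the two appendix lemmas:
\begin{itemize}
\item Equality in Theorem~\ref{BoundsLinDepSD} requires $P_{\rm err}=p_{\min}(1-\ubar{\lambda})=\ubar{\lambda}V$. Since $V>0$, this forces $\ubar{\lambda}<1$.
\item Equality in Lemma~\ref{lemma: SDBoundGeneral} then gives $B(e_k,s_k)=1$ for all $k\neq b$, where $b$ attains $\ubar{\lambda}$.
\item Equality in the last step of Lemma~\ref{SDBoundAffinelyDependent2} makes $b$ the only index with $q_k>p_k$, so $q_b>0$.
\item Equality in the discarded nonnegative terms forces $B(e_k,s_b)=0$ for every $k\neq b$. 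This uses $p_b>0$ for $k\notin Q_+$ and $q_b>0$ for $k\in Q_+$.
\end{itemize}
Hence $\ubar{\lambda}=B(e_b,s_b)=1-\sum_{k\neq b}B(e_k,s_b)=1$, a contradiction. So $P_{\rm guess}<1-\mathsf P_{\{q_k\}}$ strictly. The decomposition used in Corollary~\ref{corollary: UniformPrior} then gives the strict first inequality.
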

The proof can be found in Appendix~\ref{app:ProofCorollaryGOM}

These examples show that the novel link between contextuality and discriminability is not only interesting on its own and from a foundational perspective, but has also practical implications: it can be used to derive conclusions about the performance of a set of states in SD tasks given that they provide contextual advantage or the (im)possibility of such advantage given their discriminability. Moreover, as Corollary~\ref{Corollary:gOM} shows for generalized obliviousness tasks, Theorem~\ref{theorem:linkCandSDtasks} can provide constraints on the state set depending on the role that contextuality and SD might have in the particular task: too high overall probability of success $P^{OC}_{\rm guess}$ requires nonclassicality of the full ensemble together with classicality of some `sub-ensembles'. We note that this analysis is general: Theorem~\ref{theorem:linkCandSDtasks} can be used to derive analogous conclusions for any tasks that require contextuality or high discriminability, thus being applicable to any such task to be developed in the future and any GPT system.

\bigskip

\section{Concluding remarks}

The geometry of a set of states fundamentally limits how well those states can be discriminated. In this work we showed that, in any generalized probabilistic theory (GPT), if a set of states $\{s_k\}\subset\bar{\Omega}$ is linearly dependent{, as any state set allowing for contextual advantage,} then the success probability of any state discrimination (SD) task using all states in the set is necessarily bounded away from unity. Thus linear dependence of a state set directly constrains its achievable performance in SD tasks.

The bounds we derived depend only on simple geometric features of the state set and on the prior distribution of the SD task, and admit simplified forms that depend only on the prior distribution (Corollary~\ref{corollary:MaxAndHalfBounds}) or on the number of states and the dimension of their span (see Corollary~\ref{corollary: UniformPrior} in the Appendix). When some of the states in the set are mixtures of other states in the set (i.e., non-extremal in $\mathrm{ConvHull}[\{s_k\}]$), the bounds become even stronger, cf. Lemma~\ref{lemma:ConvexlyDependent} in Appendix~\ref{app:UniformAndConvexlyDep}. Because these results hold for arbitrary GPTs, they apply even in scenarios where the usual quantum tools for state discrimination are unavailable. Although the bounds are generally not tight approximations to $P_{\rm guess}$, their analytical simplicity and generality make them useful constraints in settings where exact optimization or quantum-specific methods (such as the pretty good measurement scheme) are unavailable or impractical. This usefulness is illustrated, for instance, by the generalized oblivious communication tasks of Example~\ref{example:gOM}.

Our second main result establishes a connection between the performance of a set of states in SD tasks and the states' power to exhibit generalized contextuality. Specifically, we showed that any set of states that provides a nonclassical advantage in contextuality-powered tasks must obey the bounds derived here in every SD task involving that set. Consequently, contextual advantage necessarily implies limited discriminability of the underlying states, as determined by the bounds of Theorem~\ref{BoundsLinDepSD}. Conversely, if a set of states allows sufficiently high success probability in some SD task, then it cannot provide an advantage in any contextuality-powered task: any prepare-and-measure experiment using only this set of states will admit of a noncontextual ontological model. This reveals a fundamental trade-off between discriminability and contextuality as operational resources.

This observation has implications for several information-processing scenarios. In particular, tasks or processes that require highly distinguishable states—such as the encoding of information in environmental fragments in quantum Darwinism—are therefore better performed by sets of states that are not useful resources for contextuality-powered tasks. As another illustration, in generalized oblivious communication tasks we showed that excessively high success probability implies that the full codebook must obey our SD bounds and that at least one sub-ensemble cannot exhibit contextual advantage.

It is worth contrasting these results with previous works that relate contextuality and state discrimination~\cite{Schmid_CAdvantageousSD} (which include the cases of unbiased discrimination or maximal confidence discrimination~\cite{FlattJonwooBae_CandMaximumConfidenceD,flatt2025unifyingcontextualadvantagesstate}, besides the minimum-error case discussed here). In those approaches, contextuality is inferred from violations of bounds on $P_{\rm guess}$ under specific operational constraints. In particular, for certain pairs of states and measurement scenarios, violating the bound implies that the underlying GPT system cannot admit a noncontextual model reproducing those constraints, and therefore that the \emph{entire GPT system (that is, the full set of states and effects)} is contextual. However, such results do not determine whether the particular set of states used in the SD task itself can provide contextual advantage. In contrast, the perspective adopted here addresses a different question: given a fixed set of states $\{s_k\}$, can it provide an advantage for a particular task? Our results provide general constraints on this question by relating the performance of a given state set across all SD tasks and all contextuality-powered tasks, without imposing additional operational assumptions. In this sense, the two approaches offer complementary insights into the relationship between state discrimination and contextuality.

\section*{Acknowledgements}
 We acknowledge Marco Erba and David Schmid for useful discussions. R.D.B was supported by the Perimeter Institute for Theoretical Physics and acknowledges support by the Digital Horizon Europe project FoQaCiA, Foundations of quantum computational advantage, GA No.~101070558, funded by the European Union, NSERC (Canada), and UKRI (UK). Research at Perimeter Institute is supported
in part by the Government of Canada through the
Department of Innovation, Science and Economic
Development and by the Province of Ontario
through the Ministry of Colleges and Universities. This study was financed in part by the Coordenação de Aperfeiçoamento de Pessoal de Nível Superior - Brasil (CAPES) - Finance Code 001. This study was financed, in part, by the São Paulo Research Foundation (FAPESP), Brasil. Process Numbers 2023/02986-2 and 2024/08701-2. JKK acknowledges the support of  National Science Centre (NCN) through the QuantEra
project Qucabose 2023/05/Y/ST2/00139.

\bibliographystyle{unsrtnat}
\bibliography{bib}

\appendix

\section{Proofs and More Specific Bounds}

\subsection{Proofs of Theorem~\ref{BoundsLinDepSD} and Corollary~\ref{corollary:MaxAndHalfBounds}}
\label{app:ProofTheorem1Corollaries1And2}

We will start by proving some useful Lemmas, using the notation
\(\ubar{\lambda}:=\min_k[B(e_k,s_k)]\) and \(p_{\rm min}:=\min_k [p_k]\),
where \(\{(p_k,s_k)\}\) is a given SD task and \(e_k\) denotes the optimal measurement effect for that task.
Moreover, we denote by \(V(\{q_k\},\{p_k\})\) the total variation distance between probability distributions \(\{q_k\}\) and \(\{p_k\}\):
\begin{align}
    V(\{q_k\},\{p_k\}) := \frac{1}{2}\sum_k|q_k-p_k| = \sum_{k\in Q_+} (q_k-p_k),
\end{align}
where \(Q_+:=\{k\,|\,q_k\geq p_k\}\).

\begin{lemma}
\label{lemma: SDBoundGeneral}
In every SD problem defined by an ensemble \(\{(p_k,s_k)\}\), the probability of error obeys the lower bound
\begin{align}                  \label{ineq:SDBoundGeneral}
    P_{\rm err}\geq p_{\rm min}[1-\ubar{\lambda}].
\end{align}
\end{lemma}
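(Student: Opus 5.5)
We work directly from the definition of the error probability at the optimal measurement $\{e_k\}$ of the task,
\begin{align}
P_{\rm err}=1-P_{\rm guess}=1-\sum_k p_k B(e_k,s_k)=\sum_k p_k\bigl[1-B(e_k,s_k)\bigr].
\end{align}
Here we use $\sum_k p_k=1$. The normalization $B(u,s_k)=1$ shows that each bracket $1-B(e_k,s_k)=B(u-e_k,s_k)$ is a probability, hence nonnegative.

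Two elementary estimates then finish the argument. First, since every term is nonnegative and $p_k\ge p_{\min}$ for all $k$, we get
\begin{align}
\sum_k p_k\bigl[1-B(e_k,s_k)\bigr]\ \ge\ p_{\min}\sum_k\bigl[1-B(e_k,s_k)\bigr].
\end{align}
Second, the sum of nonnegative terms is at least its largest term, $\max_k[1-B(e_k,s_k)]=1-\ubar{\lambda}$. Combining these gives $P_{\rm err}\ge p_{\min}(1-\ubar{\lambda})$, which is Eq.~\eqref{ineq:SDBoundGeneral}.

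The only points to check are that an optimal measurement exists and that the nonnegativity step is legitimate. Existence follows because $\mathcal{E}$ is compact and the objective is continuous. For any non-optimal measurement the same inequality holds with that measurement's own $\ubar{\lambda}$, so optimality is not actually needed. Nonnegativity of each term uses that $u-e_k\in\mathcal{E}$, or simply that $B(e_k,s_k)\le 1$.

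We do not expect any real obstacle in this lemma. The substantive work comes later: lower-bounding $1-\ubar{\lambda}$ through linear dependence. There one applies $\sum_k e_k=u$ to the decomposition $\bar s=\sum_k q_k s_k$, expands $\sum_k p_k B(e_k,s_k)$ against $\sum_k q_k B(e_k,s_k)$, and relates the difference to $V(\{q_k\},\{p_k\})$.
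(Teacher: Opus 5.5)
Your proof is correct and is essentially the paper's argument. Both write the error as a $p_k$-weighted sum of the nonnegative terms $1-B(e_k,s_k)$ and keep only the worst one; the paper first discards every term except the minimizing index $b$ and then uses $p_b\ge p_{\min}$, whereas you use $p_k\ge p_{\min}$ first and then discard.
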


\begin{proof}
Denote by \(b\) a value of \(k\) such that \(B(e_b,s_b)=\ubar{\lambda}\). Then,
\begin{align}
    P_{\rm guess}&:= \sum_k p_k B(e_k,s_k)
    =\sum_{k\neq b}p_k B(e_k,s_k) + p_b B(e_b,s_b)\nonumber \\[3pt]
    &\leq (1- p_b)+p_b\ubar{\lambda} = 1-p_b[1-\ubar{\lambda}]\nonumber \\[3pt]
    &\leq 1-p_{\min}[1-\ubar{\lambda}].
\end{align}
Since \(P_{\rm guess}+P_{\rm error}=1\), this implies that
\begin{align}
    P_{\rm error}\geq p_{\min}[1-\ubar{\lambda}].
\end{align}
\end{proof}

\begin{lemma}
\label{SDBoundAffinelyDependent2}
Consider an SD problem defined by an ensemble \(\{(p_k,s_k)\}\) in which \(\{s_k\}\) is a linearly \emph{dependent} set of GPT states.
Then, for every associated simplicial SD task \(\{(q_k,s_k)\}\), the probability of error obeys
\begin{align}
\label{ineq:SDBoundAffinelyDependent}
    P_{\rm error}\geq \ubar{\lambda}V(\{q_k\},\{p_k\}).
\end{align}
Therefore,
\begin{align}
    P_{\rm error}\geq \ubar{\lambda}\max_{\{q_k\}} V(\{q_k\},\{p_k\}),
\end{align}
where the maximization is taken over all simplicial SD tasks.
\end{lemma}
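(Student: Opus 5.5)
The plan is to write the error probability as a sum of nonnegative contributions, one per measurement outcome, and then bound from below only the contributions indexed by $Q_+$. The key input is that the prior $\{p_k\}$ and the linearly independent decomposition $\{q_k\}$ yield the same average state. Fix the optimal measurement $\{e_k\}$ for the SD task $\{(p_k,s_k)\}$, so that $\sum_j e_j=u$, and write $\lambda_k:=B(e_k,s_k)\ge\ubar{\lambda}$.

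First I would rewrite $P_{\rm err}$ using normalization and bilinearity. Since $B(u,\bar s)=1$, we have $\sum_j B(e_j,\bar s)=1$. Also $P_{\rm guess}=\sum_j p_j\lambda_j$. Hence
\begin{align}
P_{\rm err}=\sum_j\big[B(e_j,\bar s)-p_j\lambda_j\big]
=\sum_j\sum_{k\neq j}p_k\,B(e_j,s_k).
\end{align}
The second expression shows that each bracket $B(e_j,\bar s)-p_j\lambda_j$ is a sum of nonnegative terms, so every bracket is $\ge 0$.

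Next I would use the alternative decomposition $\bar s=\sum_k q_k s_k$ from Remark~\ref{rmk:AffIndepDec}. Linearity of $B$ in its second slot, together with positivity of $B$ on $\mathcal{E}\times\bar\Omega$, gives for each $j$
\begin{align}
B(e_j,\bar s)=\sum_k q_k B(e_j,s_k)\ \ge\ q_j\lambda_j .
\end{align}
For $j\in Q_+$, where $q_j\ge p_j$, the bracket therefore satisfies $B(e_j,\bar s)-p_j\lambda_j\ge (q_j-p_j)\lambda_j\ge (q_j-p_j)\ubar{\lambda}$.

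Finally, I would discard the brackets with $j\notin Q_+$, which is allowed because they are nonnegative, and sum the remaining ones over $Q_+$. This gives
\begin{align}
P_{\rm err}\ \ge\ \ubar{\lambda}\sum_{j\in Q_+}(q_j-p_j)=\ubar{\lambda}\,V(\{q_k\},\{p_k\}),
\end{align}
using the identity for $V$ recorded at the start of the appendix. Since this holds for every linearly independent (simplicial) decomposition, taking the maximum over $\{q_k\}$ gives the second inequality of the lemma.

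The only delicate step is the sign bookkeeping. A naive split of $P_{\rm err}$ into $\sum_j (q_j-p_j)\lambda_j$ plus a remainder produces negative terms for $j\notin Q_+$. I avoid this by keeping each bracket intact and using its nonnegativity before restricting the sum to $Q_+$. Linear dependence of $\{s_k\}$ is not used in the inequality itself. It only guarantees, through Remark~\ref{rmk:AffIndepDec}, that $q\neq p$, so that the bound is nontrivial.
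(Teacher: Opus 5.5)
Your proof is correct and is essentially the paper's argument. The paper likewise writes $P_{\rm error}=\sum_k B(e_k,\bar s-p_k s_k)$ as a sum of nonnegative terms, keeps only the terms with $k\in Q_+$, substitutes $\bar s=\sum_i q_i s_i$, discards the nonnegative cross terms $\sum_{i\neq k} q_i B(e_k,s_i)$, and bounds $B(e_k,s_k)\ge\ubar{\lambda}$, which is exactly your bracket-by-bracket estimate.
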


\begin{proof}
\begin{align}
    P_{\rm error}:=\sum_k\sum_{k'\neq k }
    &= \sum_k\overbrace{B(e_k,\bar{s}-p_k s_k)}^{\geq 0}
    \geq \sum_{k\in S} B(e_k,\bar{s}-p_k s_k),
\end{align}
where \(S\) is any subset \(S\subseteq\{k\}_k\), a consequence of each term being non-negative.
Now, given any \(\{q_k\}\) providing a linearly independent decomposition of $\bar{s}$ from $\{s_k\}_k$, we define the subset
\(Q_+:=\{k\,|\,q_k\geq p_k\}\), which obeys \(Q_+\subset \{k\}_k\). Therefore,
\begin{align}
    P_{\rm error}&\geq \sum_{k\in Q_+} B(e_k,\bar{s}-p_k s_k)
    =\sum_{k\in Q_+} B\!\left(e_k,\sum_i q_i s_i - p_k s_k\right)\nonumber \\[3pt]
    & = \sum_{k\in Q_+}(q_k-p_k)B(e_k,s_k)
      + \overbrace{\sum_{k\in Q_+}\sum_{i\neq k} B(e_k,q_i s_i)}^{\geq 0}
      \label{boundwithsum} \\[3pt]
    &\geq \sum_{k\in Q_+}(q_k-p_k)B(e_k,s_k)
      \geq \ubar{\lambda}\sum_{k\in Q_+}(q_k-p_k).
\end{align}
In the first equality we used the fact that $\{(q_k,s_k)\}$ is a linearly independent decomposition for the task, thus
$\bar{s}=\sum_k q_k s_k$.
Since $\sum_{k\in Q_+}(q_k-p_k)=V(\{q_k\},\{p_k\})$, we obtain the result.
\end{proof}

With these results, we prove Theorem~\ref{BoundsLinDepSD}.
\ThmLD*

\begin{proof}
This follows directly from Lemma~\ref{lemma: SDBoundGeneral} and Lemma~\ref{ineq:SDBoundAffinelyDependent}.
Indeed, recalling that \(p_{\rm min}>0\), Expression~\ref{ineq:SDBoundGeneral} implies
\begin{align}
    \ubar{\lambda}\geq 1- \frac{P_{\rm error}}{p_{\rm min}}.
\end{align}
Substituting this into Expression~\ref{ineq:SDBoundAffinelyDependent} and isolating \(P_{\rm error}\) we get the result.
\end{proof}

\CorMaxAndHalf*
\begin{proof}
    The first part of the inequality immediately follows from Theorem~\ref{BoundsLinDepSD}, since for \emph{every} $\{q_{i}\}$ providing a linear decomposition of $\overline{s}$, we have
    \begin{equation}
         P_{\mathrm{guess}}\ \le\ 1-\mathsf P_{\{q_i\}} \, ,
    \end{equation}
    and therefore
    \begin{equation}
         P_{\mathrm{guess}}\ \le\ \min_{\{q_{i}\}} \left(  1-\mathsf P_{\{q_i\}} \,\right) = 1 - \max_{\{q_{i}\}}\mathsf P_{\{q_{i}\}} \, .
    \end{equation}
    To prove the second part of the inequality, we first note that since $\{s_{k}\}$ is linearly dependent, there exists a linearly independent decomposition of $\overline{s}$ such that $q_{k}=0$ for at least some value of $k$. Call $k^{*}$ such a value, then
    \begin{align}
        V(p,q) &= \sum_{k \,  | \, p_{k} \geq q_{k} } \left( p_{k} - q_{k} \right) \\
        &= \left( p_{k^{*}} - \overbrace{q_{k^{*}}}^{=0} \right) + \overbrace{\sum_{k \,  | \, p_{k} \geq q_{k} \, ,\, k \neq k^{*} } \left( p_{k} - q_{k} \right)}^{\geq 0} \\
        &\geq p_{k^{*}} \geq p_{\min} \, .
    \end{align}
    Finally, for any such $\{q_{i}\}$,
    \begin{align}
        \mathsf P_{\{q_{i}\}} &= \frac{V(p,q) p_{\min}}{V(p,q) + p_{\min}} \\
        &\geq \frac{p_{\min ^{2}}}{2p_{\min}} = \frac{1}{2}p_{\min} \, ,
    \end{align}
    where we used that $\mathsf P_{\{q_{i}\}}$ is strictly increasing in $V(p,q)$. Thus,
    \begin{equation}
        1 - \max_{\{q_{i}\}} \mathsf P_{\{q_{i}\}} \leq 1 - \frac{1}{2}p_{\min} \, ,
    \end{equation}
    and the lemma is proven.
\end{proof}

\subsection{Further bounds: uniform priors and non-extremal states}
\label{app:UniformAndConvexlyDep}
{In this appendix we state and prove two refinements of Theorem~\ref{BoundsLinDepSD} used in the main text: a bound for uniform priors  and a strengthening for sets containing non-extremal states.}

{A case that is often of interest in SD tasks is the one in which the prior is uniform, for which the bound takes a form that depends only on the number of states and the dimension of their span.}
\begin{restatable}[SD tasks with Uniform prior]{corollary}{CorUniform}
\label{corollary: UniformPrior}
Consider a set $\{s_k\}_{k=1}^N\subset\bar\Omega$ of states in a GPT system, and let $D:=\dim(\mathsf{Span}[\{s_k\}])$.Suppose $\{s_k\}$ is linearly dependent (so $N\geq D+1$). Consider an SD task with a uniform prior over $N$ states. Then
 \begin{align}
 \label{eq:BoundUniformPrior}
P_{\mathrm{guess}}\ \le\ 1-\frac{1}{N}\!\left(\frac{N-D}{N-D+1}\right).
\end{align}
\end{restatable}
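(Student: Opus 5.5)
The plan is to apply Theorem~\ref{BoundsLinDepSD} with the uniform prior $p_k=1/N$, so that $p_{\min}=1/N$, and to choose a linearly independent decomposition $\{q_k\}$ of $\bar s$ whose total variation distance from the prior is as large as possible. Write $V:=V(\{q_k\},\{p_k\})$. The bound then reads
\begin{align}
\mathsf P_{\{q_k\}}=\frac{p_{\min}V}{V+p_{\min}}.
\end{align}
This quantity is strictly increasing in $V$, as already used in the proof of Corollary~\ref{corollary:MaxAndHalfBounds}. It therefore suffices to exhibit some decomposition with $V\ge (N-D)/N$. Substituting that value gives
\begin{align}
\mathsf P_{\{q_k\}}\ \ge\ \frac{1}{N}\cdot\frac{(N-D)/N}{(N-D)/N+1/N}=\frac{1}{N}\,\frac{N-D}{N-D+1},
\end{align}
which is exactly Eq.~\eqref{eq:BoundUniformPrior}.

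The key step is to find a linearly independent decomposition supported on at most $D$ of the states. All normalized states lie on an affine hyperplane not through the origin. Hence $\mathsf{ConvHull}[\{s_k\}]$ sits in an affine subspace of dimension $D-1$, since the linear span has dimension $D$. By Carathéodory's theorem, in the version that yields an \emph{affinely independent} subset, $\bar s\in\mathsf{ConvHull}[\{s_k\}]$ is a convex combination of at most $D$ affinely independent states from the set. Because linear and affine independence coincide for normalized states, as noted in the Preliminaries, this subset is linearly independent. It therefore defines a linearly independent decomposition $\{q_k\}$ in the sense of Remark~\ref{rmk:AffIndepDec}, and this $\{q_k\}$ vanishes on at least $N-D$ indices.

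Finally, I will use the representation $V=\sum_{k:\,p_k\ge q_k}(p_k-q_k)$. Every term in this sum is nonnegative. Each of the at least $N-D$ indices with $q_k=0$ belongs to the summation set and contributes $p_k=1/N$. Hence $V\ge (N-D)/N$, and the monotonicity argument closes the proof.

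The only step needing care is the affinely independent form of Carathéodory's theorem. I will make sure to invoke it in that form, namely by repeatedly eliminating an affine dependence among the support while keeping the weights nonnegative, rather than the mere cardinality statement. I will also make sure the hyperplane dimension count gives $D$ points rather than $D+1$.
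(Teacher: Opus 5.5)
Your proposal is correct and follows essentially the same route as the paper: take a linearly independent decomposition of $\bar s$ supported on at most $D$ states, bound $V\ge (N-D)/N$ using the at least $N-D$ vanishing weights, and then use the monotonicity of $\mathsf P_{\{q_k\}}$ in $V$ together with Theorem~\ref{BoundsLinDepSD}. Your Carath\'eodory step makes explicit the existence of such a decomposition, which the paper takes for granted from Remark~\ref{rmk:AffIndepDec}; the bound of $D$ on the support size also follows directly from linear independence inside a $D$-dimensional span.
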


\begin{proof}
    For $D = \mathrm{dim}\left( \mathrm{Span}\left( \{q_{k}\} \right) \right)$, we have a linear decomposition of $\overline{s}$ with $q_{k}>0$ for \emph{at most} $D$ values of $k$. Therefore, there exists \emph{at least} $N-D$ values of $k$ with $q_{k}=0$. Using that $p_{k} = \frac{1}{N}$ for all $k$, we may then write
    \begin{align}
        V(p,q) &= \sum_{k \,  | \, p_{k} \geq q_{k} } \left( p_{k} - q_{k} \right) \\
        &= \sum_{k \,  | \, p_{k} \geq q_{k} \, , q_{k}=0 } \left( p_{k} - \overbrace{q_{k}}^{=0} \right) + \overbrace{\sum_{k \,  | \, p_{k} \geq q_{k} \, , q_{k} \neq 0} \left( p_{k} - q_{k} \right)}^{\geq 0} \\
        &\geq \sum_{k \,  | \, p_{k} \geq q_{k} \, , q_{k}=0 }  p_{k} =  \sum_{k \,  | \, p_{k} \geq q_{k} \, , q_{k}=0 }  \frac{1}{N} \\
        &\geq \frac{N-D}{N} \, .
    \end{align}
    Moreover, $p_{\min} = \frac{1}{N}$ and the value of $\mathsf P_{ \{q_{k}\} }$ is
    \begin{align}
         \mathsf P_{\{q_{i}\}} &= \frac{V(p,q) p_{\min}}{V(p,q) + p_{\min}} \\
         &\geq \frac{ \frac{N-D}{N} \cdot\frac{1}{N} }{\frac{N-D}{N} + \frac{1}{N}} = \frac{1}{N} \left(\frac{N - D}{N - D + 1} \right) \, .
    \end{align}
    The corollary then follows from direct application of Theorem~\ref{BoundsLinDepSD}.
\end{proof}

Another special case of SD tasks occurs when some of the states to be discriminated are mixtures of other states in the set $\{s_k\}$ (ie, they are not extremal in $\mathrm{ConvHull}[\{s_k\}]$).
The next result then gives a new bound for those cases, which depends only on the prior $\{p_k\}$ and the ability to discriminate among the extremal points of $\{s_k\}$ (with some induced prior).
\begin{lemma}[Lower bound on $P_{\mathrm{error}}$ for non-extremal subsets]
\label{lemma:ConvexlyDependent}
Fix a finite set $\{s_k\}\subset\bar{\Omega}$ of states in some GPT system, and let $\bar S=\operatorname{ext}[\{s_k\}]$ be its set of extremal points in $\operatorname{ConvHull}[\{s_k\}]$. For any SD task that uses all states in the set, if $\{s_k\}$ contains at least one non-extremal state (i.e.\ $\bar S\subsetneq\{s_k\}$), then
\begin{align}
P^{\{s_k\}}_{\mathrm{error}}
\ \ge\
\frac{p_{\min}}{\,p_{\mathrm{interior}}+p_{\min}\,}\Bigl(p_{\mathrm{interior}}+P^{\bar{S}}_{\mathrm{error}}\Bigr),
\end{align}
where {$p_{\mathrm{interior}} := \sum_{s_i \in \{s_k\}\setminus \bar S} p_i$ is the total prior assigned to the non-extremal states}, $P^{\bar S}_{\mathrm{error}}$ is the optimal error probability of the SD task restricted to the extremal states $\bar S$ (with the induced prior), and $p_{\rm min}:=\min_k{p_k}$.
\end{lemma}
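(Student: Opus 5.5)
The plan is to copy the architecture of the proof of Theorem~\ref{BoundsLinDepSD}: obtain two lower bounds on $P^{\{s_k\}}_{\rm error}$ in terms of the same auxiliary quantity $\ubar{\lambda}$, one decreasing and one increasing in it, and eliminate $\ubar{\lambda}$. The target form points to the right pair. Suppose we have
\begin{align}
P_{\rm error}\ \ge\ p_{\min}\,[1-\ubar{\lambda}], \qquad P_{\rm error}\ \ge\ \ubar{\lambda}\,p_{\rm interior}+P^{\bar S}_{\rm error}-p_{\rm interior}. \nonumber
\end{align}
The first gives $\ubar{\lambda}\ge 1-P_{\rm error}/p_{\min}$. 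Substituting into the second gives $P_{\rm error}\,(1+p_{\rm interior}/p_{\min})\ge p_{\rm interior}+P^{\bar S}_{\rm error}-p_{\rm interior}$, which is too weak. So the second inequality must really have the form $P_{\rm error}\ge \ubar{\lambda}\,p_{\rm interior}+P^{\bar S}_{\rm error}$. With that form, substitution gives $P_{\rm error}(1+p_{\rm interior}/p_{\min})\ge p_{\rm interior}+P^{\bar S}_{\rm error}$, which is exactly the claim. The first inequality is Lemma~\ref{lemma: SDBoundGeneral} verbatim, with $\{e_k\}$ the optimal measurement for the full task and $\ubar{\lambda}=\min_k B(e_k,s_k)$. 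All the work is therefore in the second inequality.

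For the second inequality I would mimic Lemma~\ref{SDBoundAffinelyDependent2}. Write each interior state as a mixture of extremal ones, $s_i=\sum_{j\in\bar S}c_{ij}s_j$. The average state then decomposes over $\bar S$ alone as $\bar s=\sum_{j\in\bar S}\tilde p_j s_j$, where
\begin{align}
\tilde p_j:=p_j+\sum_{i\notin\bar S}p_i c_{ij}. \nonumber
\end{align}
I take $\{\tilde p_j\}$ to be the ``induced prior'' of the statement. It satisfies $\tilde p_j\ge p_j$ and $\sum_{j\in\bar S}(\tilde p_j-p_j)=p_{\rm interior}$. Playing the role of $Q_+$ in Lemma~\ref{SDBoundAffinelyDependent2}, the set $\bar S$ is exactly the set where $\tilde p$ exceeds $p$, and the analogue of $V$ is precisely $p_{\rm interior}$.

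Expanding $P_{\rm error}=\sum_k B(e_k,\bar s-p_k s_k)$ and substituting $\bar s=\sum_j\tilde p_j s_j$, the extremal indices $k\in\bar S$ contribute two pieces. The diagonal piece is $\sum_{k\in\bar S}(\tilde p_k-p_k)B(e_k,s_k)\ge\ubar{\lambda}\,p_{\rm interior}$. The off-diagonal piece is $\sum_{k\in\bar S}\sum_{j\in\bar S,\,j\neq k}\tilde p_j B(e_k,s_j)$. The interior indices contribute $\sum_{i\notin\bar S}B(e_i,\bar s-p_is_i)\ge 0$.

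The main obstacle is to show that the off-diagonal piece, plus whatever part of the interior terms is available, dominates $P^{\bar S}_{\rm error}$. I would build a valid measurement for the extremal task by merging each interior effect $e_i$ into some extremal outcome, $f_j=e_j+\sum_{i\to j}e_i$, and compare its error $\sum_j\tilde p_j\bigl(1-B(f_j,s_j)\bigr)\ge P^{\bar S}_{\rm error}$ term by term with the available pieces. The difficulty is the cross terms $\tilde p_j B(e_i,s_j)$ with $i$ interior. These appear in the interior contributions $B(e_i,\bar s-p_is_i)=\sum_j\tilde p_jB(e_i,s_j)-p_iB(e_i,s_i)$, but only after subtracting $p_iB(e_i,s_i)$. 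The naive estimate of that subtraction costs exactly $p_{\rm interior}$, which is what produces the weak version above.

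To avoid this loss, I would first try a cleverer merging rule. For instance, assign each $e_i$ to the extremal label $j$ maximizing $c_{ij}$, or split $e_i$ fractionally as $\sum_j c_{ij}\,e_i$ across the $f_j$. Then $p_iB(e_i,s_i)=\sum_j p_ic_{ij}B(e_i,s_j)$ is absorbed into the correct-guess part of $f$, so the subtracted terms cancel against the corresponding $\tilde p_j-p_j$ weights instead of being thrown away. If the bookkeeping closes, this yields $P_{\rm error}\ge\ubar{\lambda}\,p_{\rm interior}+P^{\bar S}_{\rm error}$ with $P^{\bar S}_{\rm error}$ computed for the prior $\{\tilde p_j\}$, and the elimination of $\ubar{\lambda}$ described above completes the proof.
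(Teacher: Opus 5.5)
\paragraph{The target inequality is false.} The step that fails is the one you reverse-engineered. You need $P_{\rm error}\ge\ubar{\lambda}\,p_{\rm interior}+P^{\bar S}_{\rm error}$, but this is false in general, so no merging rule, argmax or fractional, can close the bookkeeping.

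Take a classical bit with $s_1=(1-a,a)$, $s_2=(a,1-a)$, $s_3=\tfrac12(s_1+s_2)$, where $0<a<\tfrac12$. Use priors $p_1=p_2=\epsilon$ and $p_3=1-2\epsilon$ with $\epsilon<a/2$.
\begin{itemize}
\item The unique optimal measurement is $e_3=u$, $e_1=e_2=0$. Hence $P_{\rm error}=2\epsilon$ and $\ubar{\lambda}=0$.
\item The induced prior is $\tilde p=(\tfrac12,\tfrac12)$ under any reading of ``induced'', and $P^{\bar S}_{\rm error}=a$.
\item Your target would therefore read $2\epsilon\ge a$, which is false.
\item In your decomposition, the diagonal and off-diagonal pieces vanish, and the interior piece is only $B(e_3,\bar s-p_3s_3)=2\epsilon$.
\end{itemize}
When the interior prior dominates, the optimal strategy can send almost every outcome to an interior label. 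The whole error budget then sits far below $P^{\bar S}_{\rm error}$. The lemma still holds here, since its right-hand side is $\tfrac{\epsilon}{1-\epsilon}(1-2\epsilon+a)<2\epsilon$. It survives only because the prefactor $p_{\min}/(p_{\rm interior}+p_{\min})$ is small, and that information is lost once everything is reduced to the single number $\ubar{\lambda}$.

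The paper's proof has exactly this defect. It drops the interior terms and asserts $\sum_{k\in\bar S}\sum_{i\in\bar S,\,i\neq k}\tilde p_iB(e_k,s_i)\ge P^{\bar S}_{\rm error}$, on the grounds that this is the error of the extremal task under a non-optimal measurement (its $q_i$ is your $\tilde p_i$). But $\{e_k\}_{k\in\bar S}$ does not sum to $u$, and in the example above that quantity is $0<a$. So the obstacle you flagged is real and cannot be removed inside the eliminate-$\ubar{\lambda}$ architecture.

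\paragraph{How to prove the lemma instead.} The statement itself is true. The missing idea is to keep the full error profile rather than compressing Lemma~\ref{lemma: SDBoundGeneral} into $\ubar{\lambda}$. Define:
\begin{itemize}
\item $x_i:=1-B(e_i,s_i)$ for $i\in\bar S$;
\item $s_j=\sum_i\lambda^j_is_i$ for each interior $j$;
\item $y^j_i:=B(e_j,s_i)\le x_i$ and $b_j:=B(e_j,s_j)=\sum_i\lambda^j_iy^j_i$.
\end{itemize}
Merge each $e_j$ wholly into the label $m_j:=\arg\max_i\tilde p_iy^j_i$, and call the resulting extremal-task error $E_f\ge P^{\bar S}_{\rm error}$. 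An exact computation gives
\begin{align*}
&p_{\rm interior}P_{\rm error}+p_{\min}\bigl(P_{\rm error}-E_f-p_{\rm interior}\bigr)\\
&\quad=\sum_{j\notin\bar S}p_j\Bigl[P_{\rm error}+p_{\min}\Bigl(\tfrac{\tilde p_{m_j}}{p_j}y^j_{m_j}-b_j-\textstyle\sum_i\lambda^j_ix_i\Bigr)\Bigr].
\end{align*}
Since $E_f\ge P^{\bar S}_{\rm error}$, the lemma follows once the left-hand side is nonnegative, and it suffices that each bracket is.

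For the bracket of a given $j$, keep all extremal terms and the $j$-th interior term of $P_{\rm error}$ to get $P_{\rm error}\ge p_{\min}\bigl(\sum_{i\in\bar S}x_i+1-b_j\bigr)$. Then use $x_i\ge y^j_i$ and $\tilde p_i\ge p_j\lambda^j_i$. The bracket is at least
\[
p_{\min}\bigl[1+\textstyle\sum_i(1-3\lambda^j_i)y^j_i+\max_i\lambda^j_iy^j_i\bigr]\ \ge\ p_{\min}\bigl[1+\textstyle\sum_ic(\lambda^j_i)y^j_i\bigr],
\]
with $c(\lambda):=1-3\lambda+\lambda^2$. This is nonnegative on $[0,1]^n$, because $c(\lambda)<0$ only for $\lambda>(3-\sqrt5)/2$, and at most two weights can satisfy that. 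One such weight gives $1+c(\lambda)\ge1+c(1)=0$. Two such weights with sum $s\le1$ give at least $3-3s+s^2/2\ge\tfrac12$.
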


Combining Lemma~\ref{lemma:ConvexlyDependent} with Corollary~\ref{corollary:MaxAndHalfBounds} is an interesting strategy to calculate bounds whenever some states in the original SD task are mixtures of others: one first reduces the original SD problem containing non-extremal states to another with only extremal points, and then uses Corollary~\ref{corollary:MaxAndHalfBounds} to get a bound on $P^{\bar S}_{\rm error}$.

\begin{proof}%
We write our set of states $\{s_{k}\}$ as a disjoint union of sets $S$ and $S'$ such that $S$ is the set of extremal states in $\text{ConvHull}(\{s_{k}\})$ and $S'$ is the set of non-extremal states.

With this construction, every point in $S'$ can be written as a convex combination of states in $S$. That is, for every state $s_{j} \in S'$, one may write
\begin{equation}
    s_{j} = \sum_{i|s_{i}\in S} \lambda_{i}^{j} s_{i}
\end{equation}
for some set $\{\lambda_{i}^{j} \}_{i}$ of non-negative real numbers with $\sum_{i}\lambda_{i}^{j}=1$.

For every $i$ such that $s_{i} \in S$, we thus define
\begin{equation}
    q_{i} = p_{i} + \sum_{j| s_{j} \in S'}p_{j}\lambda_{i}^{j} \,. \label{qidefinition}
\end{equation}
One can readily check that the $q_{i}$  define a probability distribution --- i.e., $q_{i} \geq 0$ and $\sum_{i}q_{i} = 1$ --- and moreover
\begin{equation}
    \overline{s} = \sum_{i}q_{i}s_{i} \, .
\end{equation}
We also note that from \eqref{qidefinition} we have that, since the $\lambda_{i}^{j}$ are all non-negative, $q_{i} \geq p_{i}$ for all $i$ such that $s_{i} \in S$ and therefore $Q_{+}=\{i|s_{i} \in S\}$.

From equation \eqref{boundwithsum}, it thus follow that
\begin{equation}
    P_{\text{error}}^{\{s_{k}\}} \geq \sum_{k \in Q_{+}}(q_{k}-p_{k})B(e_{k},s_{k}) + \sum_{k \in Q_{+}} \sum_{i\neq k}q_{i} B(e_{k},s_{i}) \, .
\end{equation}

Since $q_{i} = 0$ for all $i \notin Q_{+}$, the rightmost term is
\begin{equation}
    \sum_{k \in Q_{+}} \sum_{i\neq k}q_{i} B(e_{k},s_{i}) = \sum_{k \in Q_{+}} \,\sum_{i \in Q{+}|i\neq k}q_{i} B(e_{k},s_{i}) \, ,
\end{equation}
which is the error probability of the state discrimination task $(q_{i},s_{i})_{i \in Q_{+}}$ with a (potentially non-optimal) measurement. As such, this quantity is $\geq P_{\text{error}}^{S}$.

For the leftmost term, we again use \eqref{ineq:SDBoundGeneral} to get
\begin{equation}
    B(e_{k},s_{k}) \geq \ubar{\lambda} \geq \left( 1 - \frac{P_{\text{error}}^{\{s_{k}\}} }{p_{\min}} \right)\, \, .
\end{equation}
As such, we arrive at
\begin{equation}
    P_{\text{error}}^{ \{ s_{k} \} } \geq \left( 1 - \frac{P_{\text{error}}^{\{s_{k}\}} }{p_{\min}} \right) V(\{q_{k}\},\{p_{k}\}) + P_{\text{error}}^{S} \, . \label{felipe1}
\end{equation}
Now, we note that the total variation distance takes the form
\begin{align}
    V(\{q_{k}\},\{p_{k}\}) &= \sum_{i \in Q_{+}} \lvert q_{i}-p_{i} \lvert \\[3pt]
    &= \sum_{i \in Q_{+}} \sum_{j \notin Q_{+}} p_{j} \lambda_{i}^{j} \\[3pt]
    &=\sum_{j \notin Q_{+}}p_{j} \,  \equiv p_{\text{interior}} \, .
\end{align}
Finally, plugging this into equation \eqref{felipe1} and rearranging gives us the desired result
\begin{equation}
     P_{\text{error}}^{ \{ s_{k} \} } \geq  \frac{p_{\min}}{p_{\min} + p_{\text{interior}}} \left( p_{\text{interior}} + P_{\text{error}}^{S} \right) \, .
\end{equation}
\end{proof}

\begin{example}
\label{example:cube}
Consider an SD task whose set of states $\{s_k\}$ consists of the eight vertices of a cube, with uniform prior. The average state is the center of the cube, and any antipodal pair (e.g. $s_1$ and $s_5$) provides a linearly independent decomposition of the center with $q_1=q_5=\tfrac12$ and $q_i=0$ otherwise (see Fig. ~\ref{fig:cube}). Hence $V(\{p_k\},\{q_k\})=\tfrac{6}{8}$, and Theorem~\ref{BoundsLinDepSD} gives
\begin{equation}
P_{\mathrm{guess}} \;\le\; 1-\frac{V(\{p_k\},\{q_k\})}{V(\{p_k\},\{q_k\})+\tfrac{1}{8}}\cdot \frac{1}{8}
\;=\; \frac{25}{28}\;\approx\; 0.893.
\end{equation}
This bound is purely geometrical and applies to any realization of this cube hull. In particular, one may view $\{s_k\}$ as eight pure qubit states forming a cube in the Bloch ball, or as the eight extremal states of a three-dimensional gbit~\cite{Janotta_2013GPTsWihtoutNH}; in both cases the same numerical bound holds (though these are likely not tight). In fact, any SD task $\{p_k,t_k\}$ in which $t_k=m(s_k)$ where $m$ is an injective linear map will necessarily respect the same bound.

Corollary~\ref{corollary: UniformPrior} and Corollary~\ref{corollary:MaxAndHalfBounds} allow for simpler calculations, though slightly looser bounds. Using that $\mathsf{Span}[\{s_k\}]=4$ and $N=8$ gives $P_{\rm guess}\leq 0.9$ for the former, and using $p_{\rm min}=\frac{1}{8}$ provides $P_{\rm guess}\leq \frac{7}{8}\approx 0.938$.

Now consider another SD task: extend $\{s_k\}$ by adding the six face-centers of the cube (total $N=14$ points) and keep the prior uniform. Here,  the extremal set $\bar S$ are the eight cube vertices (uniform by symmetry), thus $p_{\min}=\tfrac{1}{14}$ and the total prior probability for non-extremal points is $p_{\mathrm{interior}}=\tfrac{6}{14}$. Combining Lemma~\ref{lemma:ConvexlyDependent} with the above bound for $\bar S$ yields
\begin{align}
&P_{\mathrm{error}}^{\{s_k\}} \;\ge\; \frac{1}{7}\cdot\frac{3}{7} + \frac{1}{7}\cdot \frac{3}{28}\\
\therefore\,\,\,\,
&P_{\mathrm{guess}} \;\le\; 1-\Big(\frac{3}{7^2}+\frac{3}{7\cdot28}\Big)\;\approx\; 0.923.
\end{align}
By invariance under linear maps (Appendix~\ref{app:affine-invariance}, Corollary~\ref{corollary:invarianceBoundsINjectiveLinearM}; see also Example~\ref{ex:diagonal-stretch}), any set linearly equivalent to either of these geometries obeys the same numerical bounds. See Fig.~\ref{fig:cube}.
\end{example}

\subsection{Proof of Corollary~\ref{corollary:invarianceBoundsINjectiveLinearM}: Invariance of the geometric SD bounds}
\label{app:affine-invariance}

\begin{lemma}[Injective linear maps preserve linearly independent decompositions]
Consider an ensemble of GPT states $\{(p_k,s_k)\}_{k\in K}$ and let $L:\mathrm{Span}[\{s_k\}_{k\in K}]\to W$ be an injective linear map. Then, $\sum_k q_ks_k$ is a linearly independent decomposition for $\bar{s}$ if and only if $\sum q_k L(s_k)$ is a linearly independent decomposition for $L(\bar{s})$. Explicitly,
\begin{align}
    \bar s=\sum_{i\in S} q_k s_k \Longleftrightarrow\quad
L(\bar s)=\sum_{i\in S} q_k L(s_k),
\end{align}
with $\{s_k\}_{k\in {\rm Supp}[\{q_k\}]}$ linearly independent if and only if  $\{L(s_k)\}_{k\in {\rm Supp}[\{q_k\}]}$ is linearly independent.
\end{lemma}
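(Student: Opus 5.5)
The plan is to prove both implications directly from linearity and injectivity of $L$. The two ingredients are: $L$ commutes with finite linear combinations, and an injective linear map sends linearly independent families to linearly independent families, and conversely. Note that $\bar s$ and all $s_k$ lie in $\mathsf{Span}[\{s_k\}]$, the domain of $L$, so every expression below is well defined.

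For the forward direction, suppose $\bar s=\sum_{k\in S} q_k s_k$ with $\{q_k\}$ a probability distribution supported on $S$ and $\{s_k\}_{k\in S}$ linearly independent. Applying $L$ and using linearity gives
\begin{align}
L(\bar s)=\sum_{k\in S} q_k L(s_k).
\end{align}
The weights are the same probability distribution. Moreover, $L(\bar s)=\sum_k p_k L(s_k)$ is precisely the average state of the mapped task $\{(p_k,L(s_k))\}$.

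It remains to check that $\{L(s_k)\}_{k\in S}$ is linearly independent. Suppose $\sum_{k\in S} c_k L(s_k)=0$. Then $L\big(\sum_{k\in S} c_k s_k\big)=0$, so injectivity forces $\sum_{k\in S} c_k s_k=0$, and linear independence of $\{s_k\}_{k\in S}$ gives $c_k=0$ for all $k$.

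For the converse, suppose $L(\bar s)=\sum_{k\in S} q_k L(s_k)$ with $\{L(s_k)\}_{k\in S}$ linearly independent. Then
\begin{align}
L\Big(\bar s-\sum_{k\in S} q_k s_k\Big)=0,
\end{align}
so injectivity yields $\bar s=\sum_{k\in S} q_k s_k$. Linear independence of $\{s_k\}_{k\in S}$ follows because a nontrivial relation $\sum_{k\in S} c_k s_k=0$ would map under $L$ to a nontrivial relation among the $L(s_k)$.

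Finally, I would note that uniqueness of the weights on a linearly independent support (Remark~\ref{rmk:AffIndepDec}) means the correspondence $\{q_k\}\leftrightarrow\{q_k\}$ is a bijection between the decompositions on the two sides. Since the support is preserved, this also preserves $V(\{q_k\},\{p_k\})$ and $p_{\min}$, which is what is needed for Corollary~\ref{corollary:invarianceBoundsINjectiveLinearM}.

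None of the steps is a real obstacle. The only point requiring care is making sure injectivity is used on the correct domain, $\mathsf{Span}[\{s_k\}]$, where every vector involved lives. If $L$ were only defined on a larger space, its restriction is still injective, so nothing changes.
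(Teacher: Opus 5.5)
Your proposal is correct and takes essentially the same approach as the paper. Linearity transports the decomposition forward and injectivity pulls it back; injectivity preserves linear independence, while plain linearity shows the map cannot create it, which are exactly the paper's properties i)--iv). Your explicit attention to the domain $\mathsf{Span}[\{s_k\}]$ and to the resulting bijection of decompositions is a minor refinement of the same argument.
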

\begin{proof}
     Due to linearity of $L$, we have two consequences:
     \begin{enumerate}
         \item[i)] $\bar{s} = \sum p_ks_k \implies L(\bar{s})= \sum p_k L(s_k)$;
         \item [ii)] For any $\{s_i\}\subset\mathrm{Span}[\{s_k\}]$,  $\{L(s_i)\}$ is LI implies that $\{s_i\}$ is LI. ($L$ cannot ``create'' linear independence).
     \end{enumerate}
     Now, if $L$ is furthermore injective, $L(v)=0\implies v=0$. Therefore, we get the converses:
     \begin{itemize}
         \item[iii)] $L(\bar{s})= \sum p_k L(s_k) \implies\bar{s} = \sum p_ks_k$;
         \item[iv)] For any $\{s_i\}\subset\mathrm{Span}[\{s_k\}]$,  $\{s_i\}$ is LI implies $\{L(s_i)\}$ is LI ($L$ preserves independence).
     \end{itemize}
     Properties $i)$ and $iv)$ imply that if $\{q_k\}$ is a linearly independent decomposition  of $\bar s$ from $\{s_k\}$, then $\sum q_k L(s_k)$ is a linearly independent decomposition of $L(\bar{s})$. Properties $ii)$ and $iii)$ imply that if $\sum q_kL(s_k)$ is a linearly independent decomposition of $L(\bar s)$, then $\sum q_ks_k$ is a linearly independent decomposition of $\bar s$.
\end{proof}

Consequently, any numerical bound that depends only on the priors $(p_k)$ and the weighs $q_k$ of linearly independent decompositions of the average state  (e.g., $p_{\min}$ and the total variation distance $V(\{p_k\},\{q_k\})$) is unchanged when passing from a task $\{(p_k,s_k\})$ to the task $\{(p_k,L(s_k))\}$.
This leads to the corollary that we wanted to prove:
\BoundInvarianceInjective*

{\begin{rmk}
\label{rmk:InvarianceAppendixBounds}
The argument above covers every bound built from the priors and from linearly independent decompositions of the average state, i.e., Theorem~\ref{BoundsLinDepSD} and Corollary~\ref{corollary:MaxAndHalfBounds}. The two results of Appendix~\ref{app:UniformAndConvexlyDep} are invariant as well. For Corollary~\ref{corollary: UniformPrior}, it suffices to note that an injective linear map preserves $D=\dim(\mathsf{Span}[\{s_k\}])$, while $N$ and the (uniform) prior are untouched. For Lemma~\ref{lemma:ConvexlyDependent}, an injective linear map is a linear isomorphism onto its image and hence preserves the convex structure of $\mathrm{ConvHull}[\{s_k\}]$ (in particular, which states are extremal) while the priors, and thus $p_{\rm min}$, $p_{\rm interior}$ and the induced prior on the extremal set, are unchanged. Hence the numerical bounds of both results coincide for $\{(p_k,s_k)\}$ and $\{(p_k,M(s_k))\}$.
\end{rmk}}

As a quick remark, notice that we have been using the term `linearly equivalent' instead of affinely equivalent maps, while these maps also preserve the convex structure required for the bounds. The reason is that, although we are here only concerned about normalized states, ultimately any map that takes GPT states in $\bar{\Omega_G}$ to GPT states in $\bar{\Omega}_H$ (or state maps, in the terms of Ref.~\cite{Schmid_ShadowsSubsystems2025}) should map the $0$ vector in $\mathsf{Span}[\bar{\Omega}_G]$ to the $0$ vector in $\mathsf{span}[\bar{\Omega}_H]$, thus being linear, to also preserve the subnormalized states structure.

\begin{example} \label{ExampleDiagonalStretch}

[Diagonal stretch of a square leaves the bound unchanged]\label{ex:diagonal-stretch}
Let $v_1=(1,1,k)$, $v_2=(-1,1,k)$, $v_3=(1,-1,k)$, $v_4=(-1,-1,k)$ in an affine subspace of $\mathbb{R}^3$ with fixed $k\neq 0$.
Consider any prior $\{p_i\}_{i=1}^4$ and average $\bar s=\sum_i p_i v_i$.
Let $A$ be the invertible affine (here linear) map that scales along the $(1,1)$ diagonal by a factor of $2$ and leaves the $(1,-1)$ diagonal fixed:
$A(v_1)=(2,2,k)$, $A(v_2)=(-1,1,k)$, $A(v_3)=(1,-1,k)$, $A(v_4)=(-2,-2,k)$.
Then $A(\bar s)=\sum_{j\in S} q_j\,A(v_j)$ with the same coefficients $\{q_j\}_{j\in S}$.
Thus $V(\{p_k\},\{q_k\})$ is unchanged and the Theorem~\ref{BoundsLinDepSD}/Corollary~\ref{corollary:MaxAndHalfBounds} bound is identical before and after the transformation.
{The same numerical bound therefore applies not only to this stretched square, but to any realization of the square that is linearly equivalent to it -- for instance, as the four extremal states of a gbit (Fig.~\ref{fig:GPTSystemsExample}) or as four qubit states forming a square in an equatorial plane of the Bloch ball.}
\end{example}

\subsection{Proof of Lemma~\ref{lemma:CandLinearDep}}
\label{app:ProofLemmaCandLD}
In order to prove {Lemma~\ref{lemma:CandLinearDep}}, we first introduce three important ingredients. First, we recall the notion of a strictly classical GPT system (following Ref.~\cite{Schmid_ShadowsSubsystems2025}, which are simplicial GPT systems.
\begin{definition}[Simplicial GPT system = strictly classical GPT systems]
A simplicial GPT system is a system $G^d_{cl} = (\Delta_n, \Delta^{*}_n, p_{\Lambda})$
in which the normalized state space \( \Delta \) is a simplex in a finite-dimensional real vector space of dimension $n$, its effect space \( \Delta^{*} \) is the logical dual of \( \Delta \), consisting of all vectors $e\in \mathbb{R}^n$ such that  the inner product with $s$ is in the interval $[0,1]$ for all  $s \in \Delta$, and the probability rule \( p_{\Lambda} \) coincides with the Euclidean inner product \( p_{\Lambda}(e,s) = \langle e, s \rangle \) between \( \Delta^{*} \) and \( \Delta \). In other words, a simplicial GPT system has a simplex as state space, and the effect space can be thought of as being the hypercube dual to that simplex.
\end{definition}

Secondly, we introduce GPT fragment embeddings.
\begin{definition}[GPT fragment embedding]
    Consider a GPT system $(\Omega,\mathcal{E},B(.,.))$ and a fragment of thereof, ie, $f:=(\Omega',\mathcal{E}',B(.,.))$ with $\Omega'\subset\Omega$, $\mathcal{E}'\subset{\mathcal{E}}$. The fragment $f$ is said to embed into another GPT system $(\Sigma,\mathcal{F},C(.,.))$ if there exist two injective maps $\iota:\mathsf{Span}[\Omega']\to\mathsf{Span}[\Sigma]$ and $\kappa:\mathsf{Span}[\mathcal{E}']\to\mathsf{Span}[\mathcal{F}]$ which together reproduce the probabilities:
    \begin{align}
        C(\kappa(e),\iota(s))=B(e,s)\,\,\forall s\in\Omega'\,,e\in\mathcal{E}'.
    \end{align}
\end{definition}
Finally, Ref.~\cite{Schmid_2021_NCinGPTsystems} showed that for single GPT systems, \emph{noncontextuality is equivalent to the existence of an embedding into a simplicial system}. This remains true for a subset of states and effects (called fragments), as shown in Ref.~\cite{Selby_2023}.
\begin{definition}[Embedding into simplicial GPT systems]
    Consider a GPT system $(\Omega,\mathcal{E},B(.,.))$. A GPT fragment $f:=(\Omega',\mathcal{E}',B(.,.))$, with $\Omega'\subset\Omega$ and $\mathcal{E}'\subset \mathcal{E}$, is simplex-embeddable if and only if there exists an embedding $\iota,\kappa$ of $f$ into a simplicial GPT system $(\Delta_n,\Delta^*_n,p_{\Lambda})$ for some $n\in\mathbb{N}$.
\end{definition}
\LinearIndepSufficientForNC*
\begin{proof}

     Consider a GPT system $(\Omega,\mathcal{E},B(.,.))$ and a linearly independent subset of states $\{s_k\}\subset\Omega$.
    Complete the set $\{s_k\}_{k=1}^N$ to a set $\{s'_k\}_{k=1}^D$ so that  $\{s'_k\}\subset\Omega$ remains linearly independent and, moreover, span the full vector space where $\Omega$ lives, ie, $\mathsf{Span}[\{s'_k\}]=\mathsf{Span}[\Omega]$. Note that this an always be done, and if the original set already spans the space, this completion lives $\{s_k\}$ unchanged. Note that the fragment $(\{s_k\},\mathcal{E},B(.,.))$ always embeds into $(\Omega',\mathcal{E},B(.,.))$. We will show that the fragment that has all original effects $\mathcal{E}$ and the subset of states $\Omega':=\mathsf{ConvHull}[\{s'_k\}]$ is simplex embeddable.

  Define $\mathcal{E}^{\rm max}$ as the set of all elements of $\mathsf{Span}[\mathcal{E}]$ such that $B(e,s)\in[0,1]$ for all $s\in\Omega'$. Since $\Omega'\subset \Omega$, then $\mathcal{E}^{\rm max}$ contains the original set of effects,ie  $\mathcal{E}\subset\mathcal{E}^{\rm max}$. Thus, $(\Omega',\mathcal{E},B(.,.))$ trivially embeds into $(\Omega',\mathcal{E}^{\rm max},B(.,.))$ via an inclusion map. Note that since $\Omega'$ is a simplex, we can define a map $\iota$ which maps each element of $\{s'_k\}$ into a different vertex of a $\Delta_{D-1}$ simplex. Since $\mathcal{E}^{\rm max}$ is the set of all effects that map points in $\Omega'$ to $[0,1]$ via a nondegenerate probability rule $B(.,.)$, the set  $\{B(e,.)|e\in\mathcal{E}^{\rm max}\}$ is isomorphic to the set $p_{\rm Lambda}(f,.)$ in which $f\in\Delta^*$. Therefore, there exists an injective map $\kappa$ that maps $\mathcal{E}^{\rm max}$ to $\Delta_{D-1}^*$. Thus, $(\Omega',\mathcal{E}^{\rm max},B(.,.))$ embeds into $(\Delta_n,\Delta^*_n,p_{\rm \Lambda})$ for $n=D-1$.

  We have thus showed that $(\{s_k\},\mathcal{E},B(.,.))$ embeds into $(\Omega',\mathcal{E},B(.,.))$, which in turn embedds into $(\Omega',\mathcal{E}^{\rm max},B(.,.))$ which finally embeds into $(\Delta_{D-1},\Delta^*_{D-1},p_{\rm \Lambda})$. Since embddings are transitive~\cite{Schmid_ShadowsSubsystems2025}, this proves $(\{s_k\},\mathcal{E},B(.,.))$ is simplex-embeddable.
\end{proof}

\subsection{Proof of Corollary~\ref{Corollary:gOM}}
\label{app:ProofCorollaryGOM}

\CorGOM*
\begin{proof}
Assume that the set $\{s_x\}_{x\in\{0,\dots,d-1\}^n}$ provides a nonclassical advantage in the $d$-ary oblivious multiplexing task. Since advantage in generalized oblivious communication tasks requires generalized contextuality, Theorem~\ref{theorem:linkCandSDtasks} implies that the set $\{s_x\}$ must obey the bounds of Theorem~\ref{BoundsLinDepSD} in any state-discrimination task using all the states in the set. Applying Corollary~\ref{corollary: UniformPrior} to the uniform-prior SD task on $\{s_x\}$, with $N=d^n$ and $D:=\dim(\mathrm{Span}[\{s_x\}_x])$, yields
\begin{align}
P_{\rm guess}
<
1-\frac{1}{d^n}\left(\frac{d^n-D}{d^n-D+1}\right)
<
1-\frac{1}{2d^n},
\end{align}
which proves the first claim.\footnote{{As discussed below Example~\ref{example:gOM} in the main text, the obliviousness constraints of Eq.~\eqref{eq:obliviousness} alone already force linear dependence of the full codebook, so this first claim in fact holds for every suitable encoding, with or without nonclassical advantage.}}

For the second claim, recall that for each $y\in\{1,\dots,n\}$, Bob's subtask is the SD problem with ensemble
\begin{align}
\left\{\frac{1}{d},\, s_{x_y}\right\}_{x_y=0}^{d-1},
\qquad
s_{x_y=v}:=\frac{1}{d^{\,n-1}}\sum_{x:\,x_y=v}s_x,
\end{align}
and that the total success probability satisfies
\begin{align}
P_{\rm succ}^{\rm OC}=\frac{1}{n}\sum_{y=1}^n P_{{\rm guess},y}.
\end{align}
Suppose now that
\begin{align}
P_{\rm succ}^{\rm OC}>1-\frac{1}{2d}.
\end{align}
Since $P_{\rm succ}^{\rm OC}$ is the average of the $n$ numbers $P_{{\rm guess},y}$, there must exist some $y^*\in\{1,\dots,n\}$ such that
\begin{align}
P_{{\rm guess},y^*}>1-\frac{1}{2d}.
\end{align}
Now consider the corresponding sub-ensemble
\begin{align}
\left\{\frac{1}{d},\, s_{x_{y^*}}\right\}_{x_{y^*}=0}^{d-1}.
\end{align}
If this sub-ensemble could provide an advantage in a contextuality-powered task, then by Theorem~\ref{theorem:linkCandSDtasks} it would have to obey the bounds of Theorem~\ref{BoundsLinDepSD} in any SD task using all its states. In particular, by Corollary~\ref{corollary:MaxAndHalfBounds} applied to the uniform prior $1/d$, one would have
\begin{align}
P_{{\rm guess},y^*}\le 1-\frac{1}{2d},
\end{align}
which contradicts the inequality above. Therefore, the sub-ensemble
\begin{align}
\left\{\frac{1}{d},\, s_{x_{y^*}}\right\}_{x_{y^*}=0}^{d-1}
\end{align}
cannot provide advantage in any contextuality-powered task. This proves the result.
\end{proof}

\end{document}